\newtheorem{proposition}{Proposition}
\begin{document}

\title{Joint Trajectory and Resource Optimization for HAPs-SAR Systems with Energy-Aware Constraints
}

\author{Bang~Huang,~\IEEEmembership{Member,~IEEE,}	
Kihong Park,~\IEEEmembership{Senior Member,~IEEE,} 
Xiaowei Pang,
	Mohamed-Slim Alouini,~\IEEEmembership{Fellow,~IEEE,}
	\thanks{ The authors are with the Computer, Electrical and Mathematical Science and Engineering (CEMSE) division in King Abdullah University of Science and Technology (KAUST), Thuwal 6900, Makkah Province, Saudi Arabia.  (Emails: bang.huang@kaust.edu.sa; kihong.park@kaust.edu.sa; xiaowei.pang@kaust.edu.sa; slim.alouini@kaust.edu.sa;) (Corresponding author: Bang Huang)}}
\maketitle

\begin{abstract}
This paper investigates the joint optimization of trajectory planning and resource allocation for a high-altitude platform stations synthetic aperture radar (HAPs-SAR) system. To support real-time sensing and conserve the limited energy budget of the HAPs, the proposed framework assumes that the acquired radar data are transmitted in real time to a ground base station for SAR image reconstruction. A dynamic trajectory model is developed, and the power consumption associated with radar sensing, data transmission, and circular flight is comprehensively analyzed. In addition, solar energy harvesting is considered to enhance system sustainability. An energy-aware mixed-integer nonlinear programming (MINLP) problem is formulated to maximize radar beam coverage while satisfying operational constraints. To solve this challenging problem, a sub-optimal successive convex approximation (SCA)-based framework is proposed, incorporating iterative optimization and finite search. Simulation results validate the convergence of the proposed algorithm and demonstrate its effectiveness in balancing SAR performance, communication reliability, and energy efficiency. A final SAR imaging simulation on a 9-target lattice scenario further confirms the practical feasibility of the proposed solution.
\end{abstract}

\begin{IEEEkeywords}
Communication, High-altitude platform stations (HAPs), mixed-integer nonlinear programming (MINLP), SAR, successive convex approximations (SCA).
\end{IEEEkeywords}

\IEEEpeerreviewmaketitle

\section{Introduction}

\IEEEPARstart{t}{he}  \textit{Top 10 Emerging Technologies of 2024} report released by the \textit{World Economic Forum (WEF)} highlights the transformative potential of {high-altitude platform stations (HAPs)} in extending wireless connectivity to regions underserved by conventional terrestrial and satellite infrastructure~\cite{wef2024top10,belmekki2024cellular}. As a key enabler of {non-terrestrial networks (NTNs)}~\cite{anicho2021multi}, HAPs are increasingly regarded as a foundational technology for future wireless communication systems. Operating in the stratosphere at altitudes of 20–30\,km, HAPs uniquely combine the {broad-area coverage of satellites} with the {maneuverability and flexible deployment} of unmanned aerial vehicles (UAVs), making them especially suitable for long-endurance missions such as persistent connectivity, environmental monitoring, and real-time observation.

Driven by the growing global demand for ubiquitous, high-throughput, and low-latency wireless services, HAPs are poised to play a pivotal role in the development of {integrated satellite–airborne–terrestrial communication networks}.
Compared with conventional satellite constellations, HAPs offer several key advantages, including {lower deployment costs}, {shorter deployment timelines}, and {significantly reduced communication latency}~\cite{Shibata2019AStudyon}. These benefits have placed HAPs at the forefront of both academic research~\cite{dOliveira2016highaltitude} and industrial innovation~\cite{airbuszephyr,JirousekPeichl2023DesignofaSynthetic,JirousekPeichl2024DLR}, reinforcing their strategic relevance in the evolution of next-generation wireless ecosystems.

Although the concept of HAPs has been around for nearly three decades~\cite{wang2014high,aragon2008high,hall1983preliminary}, its large-scale deployment has long been constrained by technological limitations. Earlier efforts were hindered by low solar energy conversion efficiency, the absence of lightweight structural materials, and inadequate avionics and control capabilities. However, recent advances in enabling technologies—such as high-performance composite materials, high-efficiency solar energy systems, antenna beamforming, and autonomous avionics—have collectively revived momentum in HAPs development~\cite{dOliveira2016highaltitude}. Depending on mission-specific requirements, such as payload, power demand, and coverage objectives, modern HAPs platforms can be realized in the form of {balloons, airships}, or {fixed-wing solar-powered aircraft}~\cite{abbasi2024HAPs}.

Recent works have proposed architectural visions and communication frameworks for future HAPs-based systems~\cite{kurtKarabulut2021Avision}, while surveys and technology trend analyses have highlighted the maturity of airframe, power, and payload subsystems required for sustained high-altitude operations~\cite{belmekki2024cellular}. Large-scale stochastic-geometry analyses further show that directional HAPs beams
can deliver reliable wide-area coverage even under dense spatial reuse\,\cite{Lou2025Coverage}.
Complementary cellular enhancements, such as two-tier terrestrial/HAPs
architectures\,\cite{Arum2024TwoTier} and non-orthogonal multiple access (NOMA)-assisted remote-coverage
schemes\,\cite{Javed2025SystemDesign}, underscore the role of HAPs in extending service to
underserved regions.

Beyond radio access, the stratosphere is evolving into a compute continuum.
Hierarchical aerial-edge frameworks now coordinate task off-loading, user
association, and resource orchestration across cooperating UAVs and HAPs
\cite{Nabi2025JointOffloading,Wu2025MultiHAP,Yu2024MECNOMA},
while multi-HAPs clusters embedded in space–air–ground–sea (SAGS) networks
reduce latency for maritime and polar users\,\cite{Wu2025MultiHAP}.
Physical-layer innovations reinforce this evolution.
Laser relays that exploit Fresnel–Poincaré–Kummer optics for terrestrial-HAPs-satellite
links\,\cite{Lu2024EFPK},
adaptive beam-switching protocols for high-speed rail\,\cite{Li2025BeamSwitching},
hemispherical antenna arrays that equalize capacity across footprints\,\cite{Abbasi2024HemiArray},
reconfigurable intelligent surfaces (RIS)-assisted terahertz (THz)  NTNs links\,\cite{Amodu2024RIS}, and multi-HAPs THz relays
robust to I/Q imbalance\,\cite{Ahrazoglu2024MultiHAPs} demonstrate the feasibility of
ultra-high-capacity stratospheric links, while optical inter-HAPs backbones
boost reliability for integrated space–air–ground networks\,\cite{Niu2025Optical}.

Sustainability threads permeate these advances.
Energy-aware radio-access planning\,\cite{Salamat2023Energy},
data-centre-enabled HAPs concepts\,\cite{Abderrahim2024DataCenter},
and cloud-enabled HAPs delivering equitable 6G access\,\cite{Alghamdi2022CloudHAPs}
illustrate the shift toward carbon-neutral, service-oriented NTNs.
Power-allocation strategies for HAPs-assisted LEO backhauls\,\cite{Ali2024PowerLEO},
efficient data-transmission protocols for massive HAPs constellations\,\cite{Wang2024EfficientDT},
and joint user-association/beamforming designs in satellite–HAPs–ground
networks\,\cite{Jia2022UserAssociation} complete the picture of a
maturing ecosystem in which HAPs evolve from simple airborne relays to fully
integrated compute-and-sensing hubs orchestrating resources across the
SAGS continuum.



HAPs combine cost-effective deployment with wide-area coverage, agile repositioning, and multi-month endurance, making them attractive not only for bridging the digital divide but also for mission-critical services such as disaster relief, environmental monitoring, and smart agriculture \cite{abbasi2024HAPs,Lou2024NTN}.
Among the sensing modalities that can unlock these services, synthetic-aperture radar (SAR) is particularly compelling because it delivers high-resolution images day or night, in virtually any weather.  After earthquakes, landslides, or floods, events that often coincide with heavy cloud cover or rain, SAR can rapidly map damage, track terrain deformation, and guide emergency response.  Equipping HAPs with SAR sensors therefore provides a practical path to real-time, resilient situation awareness at continental scales.  Recent work extends this concept even further. HAPs-assisted integrated sensing-and-communication (ISAC) architectures couple high-resolution surveillance with secure data delivery \cite{Wang2025AerialISAC}, while mission-oriented schedulers that coordinate satellites, HAPs, and ground nodes enable specialised applications ranging from oil-spill monitoring \cite{Zhang2025OilSpill} to uninterrupted connectivity along high-speed-rail corridors \cite{Li2025HSRScheduling}.

Motivated by the need for an integrated design that simultaneously exploits the wide-area connectivity of HAPs and the all-weather, high-resolution sensing of SAR, this paper proposes a joint trajectory-and-resource optimization framework for HAPs-SAR missions. Moreover, the contributions of this paper can be summarized as follows:
\begin{itemize}
    \item While most existing HAPs research has focused primarily on communication functions, this paper explicitly incorporates SAR imaging into the system model, an essential capability for applications such as disaster response, environmental monitoring, and smart agriculture. Recognizing the unique circular flight dynamics of HAPs platforms, we propose a novel dual-mode SAR sensing framework that integrates both small-circle circular SAR (CSAR) and large-circle circular trace scanning SAR (CTSSAR) modes. This hybrid configuration leverages the maneuvering flexibility of HAPs to enhance spatial coverage and imaging diversity. To the best of the authors’ knowledge, this is the first work to introduce such a coordinated SAR sensing architecture tailored to the operational characteristics of stratospheric platforms.
    \item Building upon the proposed sensing architecture, this paper presents a mathematical formulation of the HAPs trajectory, alongside an analytical characterization of the SAR sensing region and the necessary conditions for successful image acquisition. Given the stringent energy constraints inherent to stratospheric platforms, real-time onboard SAR processing is typically infeasible, necessitating the offloading of raw echo data to a ground base station (BS) for reconstruction. Accordingly, we also model the communication link required to support reliable and timely data transmission.
From an energy perspective, the HAPs must judiciously allocate its limited solar-harvested power across three competing subsystems, propulsion, energy-intensive SAR payloads, and long-range data downlinks. To holistically capture these interdependent factors, we formulate a unified mixed-integer nonlinear programming (MINLP) framework that jointly optimizes flight trajectory, sensing performance, communication quality, and energy sustainability.
\item  To reconcile these constraints, we develop a successive convex approximation (SCA) algorithm augmented with a finite search over SAR sweeps. This hybrid solver yields iterative, tractable sub-problems whose solutions provably converge to a high-quality operating point.
\item
Comprehensive simulations confirm the effectiveness of the proposed approach. The HAPs adaptively adjusts altitude and heading to maximize radar coverage, which improves energy utilization, and sustains reliable data transmission for real-time, off-board image reconstruction. A final simulation over a 9-target lattice demonstrates accurate SAR imaging and validates the practical feasibility of the joint design.
\end{itemize}

The remainder of this paper is structured as follows. Section \ref{sec2} introduces the system model, including the HAPs trajectory, SAR sensing geometry, communication link, and energy supply, all of which are mathematically characterized. In Section \ref{sec3}, the joint optimization problem is formulated as a MINLP, and a solution approach is developed based on the SCA method. Section \ref{sec4} presents simulation results that validate the proposed framework. Finally, Section \ref{sec5} concludes the paper and discusses future research directions.

\section{System model}
\label{sec2}
Fig.\ref{fig1} illustrates the schematic of the proposed SAR imaging framework for ground observation using a HAPs. The HAPs performs a circular flight around the vertical ($z$) axis and operates in two distinct SAR modes to balance resolution and coverage.
As shown in Fig.\ref{fig1}(a), the platform initially flies along a small-radius circular trajectory and performs CSAR imaging \cite{LinHong2011ExtensionofRange}, which enables high-resolution, full-azimuth imaging of the target scene. To expand the sensing range, the platform gradually increases its flight radius, transitioning into the CTSSAR mode \cite{LiaoWang2016TwoDimensionalSpectrum}, as illustrated in Fig.\ref{fig1}(b). This mode enables scene-wide mapping by scanning concentric circular rings.

Throughout the flight, SAR echo data is transmitted to nearby ground BS for real-time image reconstruction. This design avoids performing onboard SAR processing, which is generally infeasible due to the HAPs's limited energy and computational resources.
To ensure high-quality image acquisition, the CSAR mode requires that the radar beam fully illuminates the entire circular track, enabling full-perspective imaging. In contrast, the CTSSAR mode demands that each circular ring is fully covered to guarantee complete scene mapping. Additionally, to ensure seamless coverage over extended ground areas, it is essential that adjacent circular sweeps are contiguous, without any gaps between beam footprints.
 
\begin{figure}[htp]
	\centering
	\subfigure[]{
		{\includegraphics[width=0.35\textwidth]{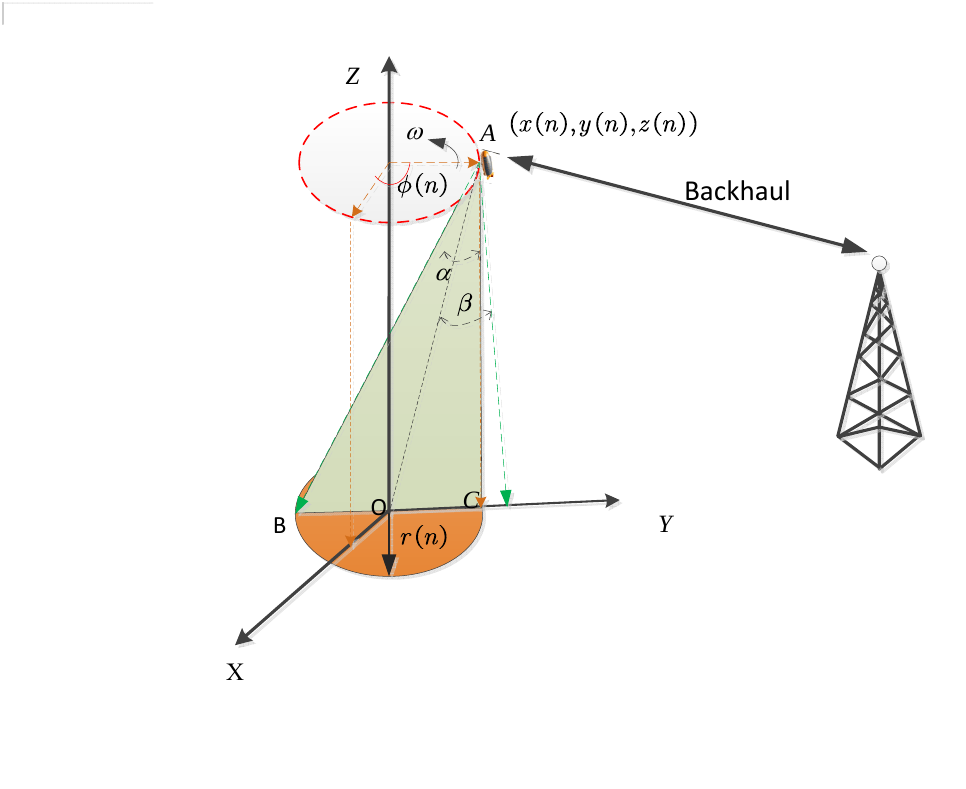}}}
	\subfigure[]{
		{\includegraphics[width=0.31\textwidth]{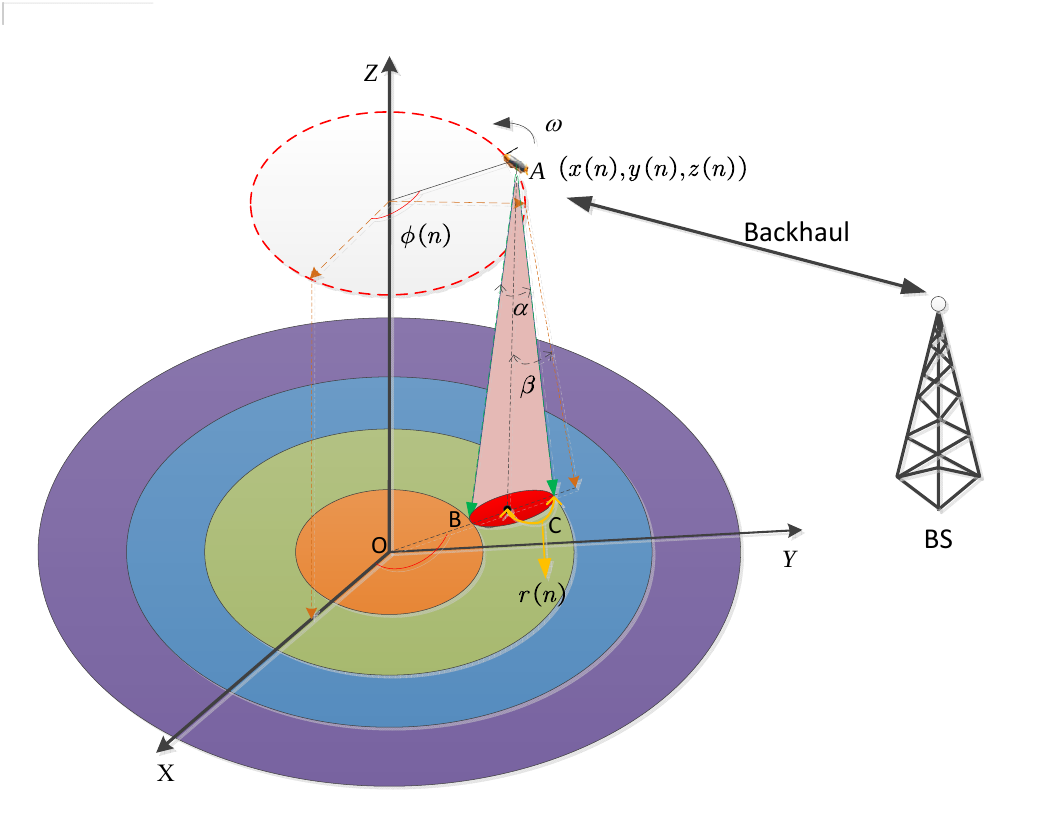}}}
	\caption{Imaging scenario for HAPs-SAR with a backhaul communication link. (a) Circular SAR mode. (b) Circular Trace Scanning SAR mode.}
	\label{fig1}
\end{figure}

\subsection{Design Trajectory}
Consider a HAPs operating above the ground and moving along a circular trajectory of radius \( R \) in a plane parallel to the $xy$ plane. The platform rotates with a constant angular velocity denoted by \( \omega \). At the initial slow-time instant \( \eta = 0 \), the HAPs is assumed to be located on the positive side of the $x$-axis. Accordingly, the azimuth angle relative to the $x$-axis is given by  $\varphi(\eta) = \omega \eta.$
Let \( T \) represent the time required for the HAPs to complete one full circular revolution. This period is uniformly discretized into \( N \) time intervals, where the time step for each segment is defined as  $\Delta = \frac{T}{N}$.
To ensure consistent discretization, \( \Delta \) should be selected such that \( N = T / \Delta \) is an integer.

To complete the prescribed ground mapping strip, the HAPs is assumed to perform \( M \) full circular rotations. To more precisely characterize the temporal evolution of the HAPs trajectory over these circular sweeps, we define the following set of time slots:  
\begin{equation}
	\label{key}
	\mathbb{A} =\left\{ 1,N+1,\ldots,\left( M-1 \right) N+1 \right\}, 
\end{equation}
and
\begin{equation}
	\label{key}
	\mathbb{B} =\left\{ N,2N,\cdots ,MN \right\}, 
\end{equation}
to represent the indices corresponding to the start and end time slots of each circular motion, respectively.  Subsequently, define the set $
\mathbb{C} 
$ as
\begin{equation}
	\label{key}
	\begin{split}
		\mathbb{C} =&\left\{ 1,2,\cdots ,NM \right\} \backslash \left\{ \mathbb{A} \cup\mathbb{B} \right\}
		\\
		=&\left\{ n\in \left\{ 1,2,\cdots ,NM \right\} \left| \left( n-1 \right) mod\,\,N\ne 0,n\,\,mod\,\,N\ne 0 \right. \right\} 
	\end{split}
\end{equation}
which corresponds to the indices of all remaining time intervals. 
The position of the HAPs at time slot \( n \) is represented as $
\left( x\left( n \right) ,y\left( n \right) ,z\left( n \right) \right) 
$ in the three-dimensional (3D) Cartesian coordinate system. Here, \( x\left( n \right) \) and \( y\left( n \right) \) represent the coordinates in the \( xy \)-plane, while \( z\left( n \right) \) denotes the height of the HAPs above the ground. 

The HAPs performs a circular motion around the $z$-axis, returning to the starting point along the $x$-axis. In other words, to ensure the completeness of the ground mapping strip, the HAPs must align the start and end points during each circular motion. 
It is important to note that the time indices in set \( \mathbb{A} \) typically correspond to the time slots in which the distance between the HAPs platform and the $z$-axis changes.

Further, the angle traversed by the HAPs at time slots \( n \) can be represented as 
\begin{align}
	\label{eq4}
	\phi (n)=&2\pi \frac{n-1}{N},n\in \mathbb{A} ,
	\\
	\label{eq5}\phi (n)=&2\pi \frac{n}{N},n\in \mathbb{B} ,
	\\
	\label{eq6}\phi (n)=&\omega n\Delta ,n\in \mathbb{C} ,
\end{align}
where $n=1,2,\ldots,MN$.
\eqref{eq4}-\eqref{eq5} ensure that the start and end points of each circular motion coincide, while \eqref{eq6}
guarantees that the HAPs performs circular motion.

Assume that the center of the SAR signal beam emitted by the HAPs forms an angle \( \beta \) with the vertical axis of the HAPs projected onto the $xy$ plane. The width of the signal beam emitted by the HAPs is \( \alpha \). In this case, it is easy to determine that the distance from the center of the beam to the edge of the beam on the ground can be expressed as
\begin{equation}
	\label{eq8}
	r\left( n \right)=\frac{1}{2}z\left( n \right) \left( \tan \alpha _1-\tan \alpha _2 \right),
\end{equation}
where the angles $\alpha _1$ and $\alpha _2$ are given by $	\alpha _1=\beta +\frac{\alpha}{2}$ and
	$\alpha _2=\beta -\frac{\alpha}{2}$, respectively.

Furthermore, the 3D position coordinates of the HAPs at time slot \( n \) can be represented as

\begin{align}
    \label{key}
&C1:x\left( 1 \right) =z\left( 1 \right) \tan \beta ,
\\
&C2:
x\left( n \right) =x\left( n-1 \right) +2r\left( n \right) +
\dot{z}\left( n \right) ,\forall n\in \mathbb{A} \backslash \left\{ 1 \right\},
\\
&C3:x\left( n \right) =x\left( \left( \lceil \frac{n}{N} \rceil -1 \right) N +1 \right) \cos \left( \varphi \left( n \right) \right) ,\forall n\in \mathbb{B} \cup \mathbb{C} ,
\\
&C4:y\left( n \right) =x\left( 
\left( \lceil \frac{n}{N} \rceil -1 \right) N+1 \right) \sin \left( \varphi \left( n \right) \right) ,
\\
&C5:z\left( n \right) =z\left( n-1 \right) ,\forall n\in \mathbb{B} \cup \mathbb{C} .
\end{align}
where $
\lceil \cdot \rceil 
$ denotes the ceiling function and $
\dot{z}\left( n \right) =\left[ z\left( n \right) -z\left( n-1 \right) \right] \tan \alpha _2
$.  
The constraint $C2$ ensures there are no gaps between the beam coverage areas of adjacent sweeps. Meanwhile, $C3$ and $C4$ ensure that the HAPs follows a circular trajectory.

%
%
%
%
\subsection{SAR Analysis}
This subsection will analyze the ground coverage area for SAR imaging using HAPs. 
\begin{figure}[htp]
	\centering
	\subfigure[]{
		{\includegraphics[width=0.24\textwidth]{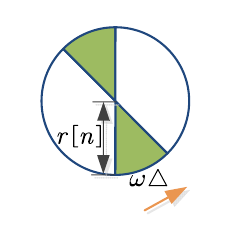}}}
	\subfigure[]{
		{\includegraphics[width=0.22\textwidth]{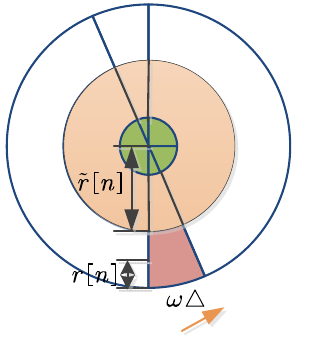}}}
	\caption{Covering area for HAPs. (a) CSAR mode. (b) CTSSAR mode.}
	\label{fig2}
\end{figure}
Furthermore, when HAPs operates in CSAR mode, specifically mapping the smallest circle as shown in Fig. \ref{fig2}(a), it is straightforward to derive that the area covered by the SAR beam within each time interval \(\Delta\) can be expressed as

\begin{equation}
    \label{key}
S_1=\omega \bigtriangleup _{}r^2\left( n \right). 
\end{equation}

Fig. \ref{fig2}(b) shows the imaging results of the HAPs operating in CTSSAR mode for the ground surface. The area scanned by SAR within the time interval \(\Delta\) can be calculated as
\begin{equation}
    \label{key}
    \begin{split}
S_2=&\frac{1}{2}\left[ 2r\left( n \right) +\tilde{r}\left( n \right) \right] ^2\omega \bigtriangleup -\frac{1}{2}\tilde{r}^2\left( n \right) \omega \bigtriangleup 
\\
=&2\left\{ r^2\left( n \right) +r\left( n \right) \tilde{r}\left( n \right) \right\} \omega \bigtriangleup 
    \end{split}
\end{equation}
with expression of $\tilde{r}\left( n \right)$ is given by
\begin{equation}
    \label{key}
\tilde{r}\left( n \right) =\begin{cases}
	r\left( n-N \right) ,\lceil \frac{n}{N} \rceil =2,\\
	r\left( n-N \right) +\sum_{m=2}^{\lceil \frac{n}{N} \rceil -1}{2r\left( n-mN \right) ,\lceil \frac{n}{N} \rceil >2}.\\
\end{cases}
\end{equation}

Hence, the area observed by HAPs during the entire surveying period can be represented as
\begin{equation}
    \label{eq19}
S=\sum_{n=2}^N{S_1}+\sum_{n=N+1,n\notin \mathbb{A}}^{MN}{S_2.}
\end{equation}

The preceding discussion has analyzed the entire imaging coverage area of the HAPs. Next, we will introduce two parameters in detail, the SAR data rate and the minimum signal-to-noise ratio (SNR) required to achieve effective SAR imaging. In this respect, based on the results from \cite{younis2004digital,currie1992wide}, the data rate generated by SAR imaging in time slot \( n \) can be expressed as
\begin{equation}
    \label{key}
D_{\min}\left( n \right) =B_w\left[ \frac{2}{c}\left( R_{\mathrm{far}}\left( n \right) -R_{\mathrm{near}}\left( n \right)  \right) +T_p \right] \cdot PRF\,\,\left[ \mathrm{bit}/\mathrm{s} \right], 
\end{equation}
where $B_w$ represents the bandwidth of the SAR transmitted pulse, and $T_p$ denotes the pulse width. The symbol $c$ stands for the speed of light, and $PRF$ refers to the pulse repetition frequency. $R_{\mathrm{far}}\left( n \right) $ and $R_{\mathrm{near}}\left( n \right) $ indicate the radar's maximum unambiguous range and minimum unambiguous range, respectively, which are given by
\begin{equation}
\label{key}
R_{\mathrm{far}}\left( n \right) =\frac{z\left( n \right)}{\cos \alpha _1},R_{\mathrm{near}}\left( n \right) =\frac{z\left( n \right)}{\cos \alpha _2}.
\end{equation}

Furthermore, based on the radar equation \cite{richards2005fundamentals}, the SNR received by the HAPs receiver is given by
\begin{equation}
    \label{eq19}
\mathrm{SNR}\left( n \right) =\frac{P_{\text{Rad}}(n) G_tG_r\lambda ^3\sigma _0cT_p\mathrm{PRF}}{256\pi ^3R_{0}^{3}KTF_nB_wL_sV(n)\sin \delta}
\end{equation}
where \( P_{\text{Rad}}(n) \) denotes the radar transmission power at the \( n \)-th time slot. \( G_t \) and \( G_r \) represent the transmit and receive gains, respectively. \( \lambda \) is the wavelength, while \( \sigma_0 \) signifies the backscatter coefficient. \( R_{0} \) refers to the distance between the radar and the center of the scene, commonly known as the reference distance. \( K \) stands for Boltzmann's constant, and \( T \) represents the receiver temperature. \( F_n \) denotes the receiver noise, \( L_s \) indicates system loss, and \( V(n) \) is platform velocity, whose expression is\footnote{The velocity expression in \eqref{eq22} is derived solely based on the equilibrium between centrifugal and gravitational forces. This formulation remains valid under the assumption that complex factors such as wind disturbances, lift variations, and other aerodynamic effects are neglected \cite{beard2012small,austin2011unmanned}.} 
\begin{align}
    \label{eq22}
&C6:
V\left( n \right) =\sqrt{x\left( n \right) g\tan \zeta},\forall n\in \mathbb{A}  
\\
&C7:V\left( n \right) =V\left( n-1 \right)
,\forall n\in \mathbb{B} \cup \mathbb{C}. 
\end{align}

Here, the definition of the symbol \(g\) is provided in Table \ref{tab1} and $
\zeta 
$ denotes the banking angle.
Moreover, \( \delta \) represents the beam incidence angle, the angle between the antenna line-of-sight vector and its intersection with the $xy$ plane. In addition, the following relationship can be easily derived from Fig. \ref{fig3}, namely 
\begin{align}
    \label{eq20}
    \delta +\beta =\frac{\pi}{2},
    z\left( n \right) =R_{0}^{}\sin \delta .
\end{align}

\begin{figure}[htp]
	\centering
	\subfigure{
		{\includegraphics[width=0.40\textwidth]{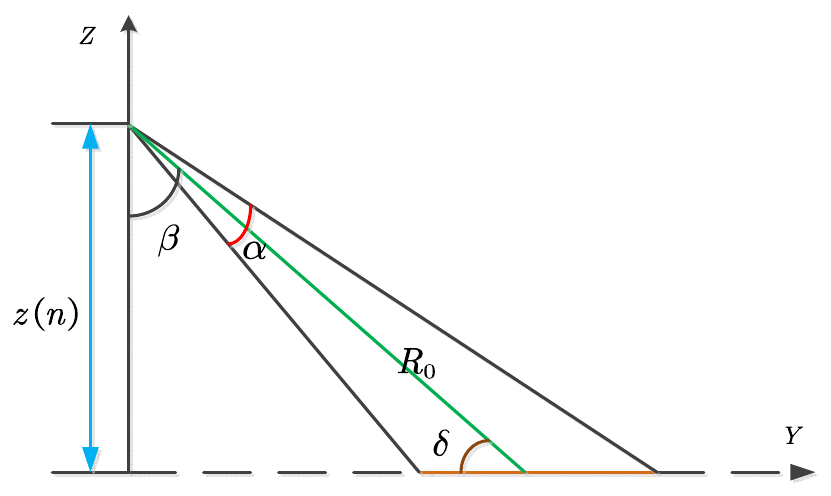}}}
	\caption{Geometric schematic of the HAPs within the $yz$ plane.}
	\label{fig3}
\end{figure}

Moreover, inserting \eqref{eq20} into \eqref{eq19} yields 
\begin{equation}
  \label{equ25}
\mathrm{SNR}\left( n \right) =\frac{P_{Rad}\left[ n \right] G_tG_r\lambda ^3\sigma _0cT_p\mathrm{PRF}\cos ^2\beta}{256\pi ^3z^3\left( n \right) KTF_nB_wL_sV\left( n \right)}.
\end{equation}
\subsection{Communication Link}
HAPs platforms are expected to primarily rely on renewable energy sources, such as solar or wind power, for their long-term energy supply~\cite{belmekki2024cellular}. Given the limited onboard power and processing capabilities, {real-time SAR image reconstruction directly on the HAPs is generally infeasible}. Instead, the raw ground-scattered SAR echoes must be transmitted to a ground BS for post-processing. This requirement establishes a {data backhaul link} between the HAPs and the BS.

To conserve onboard storage and support {real-time ground-based imaging}, it is crucial that the SAR echo data should be transmitted promptly to the BS, rather than accumulated and sent only after the entire mapping operation is complete. This subsection analyzes the characteristics of this air-to-ground communication link.

For modeling purposes, we assume that the HAPs-to-BS backhaul link consists solely of a {line-of-sight (LOS)} channel, and the air-to-ground propagation adheres to a {free-space path loss} model. The channel power gain at time slot \( n \) is then expressed as
\begin{equation}
\label{key}
\rho \left( n \right) =\frac{\rho _0}{d^2\left( n \right)}.
\end{equation}

Here, \( \rho_0 \) denotes the channel gain at a reference distance of 1m, while \( d(n) \) represents the LOS distance between the HAPs and the BS at time slot \( n \). Assuming the BS is positioned at \( (x_s, y_s, z_s) \) in a 3D Cartesian coordinate system, the expression for \( d(n) \) can be written as
\begin{equation}
  \label{key}
d\left( n \right) =\sqrt{\left( x\left( n \right) -x_s \right) ^2+\left( y\left( n \right) -y_s \right) ^2+\left( z\left( n \right) -z_s \right) ^2}.
\end{equation}

If the SAR's transmitted communication power at time slot \( n \) is denoted as \( P_{\mathrm{Com}}(n) \), the received SNR at the BS can be expressed as
\begin{equation}
  \label{key}
\gamma\left( n \right) =\frac{P_{\mathrm{Com}}\left( n \right) \rho \left( n \right)}{\sigma _{0}^{2}}
\end{equation}
with $\sigma _{0}^{2}$ denoting the receiver's noise power. Furthermore, based on the above analysis, the instantaneous backhaul throughput from the HAPs to the BS can be readily given by
\begin{equation}
  \label{key}
D_c\left( n \right) =B_c\log \left( 1+\gamma \left( n \right) \right) 
\end{equation}
where $B_c$ represents the bandwidth of communicating signal. To enable the real-time transmission of SAR sensing data, the following condition must be satisfied
\begin{equation}
    \label{key}
C8: D_c\left( n \right) >D_{\min}\left( n \right). 
\end{equation}

The reason for using $ > $ instead of  $\geqslant $ in $C8$ is that this paper does not account for essential signals, such as clock synchronization and the positional information of the HAPs, which are transmitted alongside the SAR raw echo.
\subsection{Energy Analysis}
The proposed HAPs concept involves operating at higher stratospheric altitudes during daylight hours, serving as an airborne BS for ground-based communication while simultaneously performing sensing tasks by receiving echo signals from the surface. During flight, the HAPs is equipped with solar panels to continuously harvest solar energy, which powers its real-time operations, such as propulsion and data transmission, and stores surplus energy to support nighttime activity.

In typical communication-only configurations, a HAPs may remain quasi-static during nighttime to minimize energy consumption. However, when sensing is included, the platform must remain fully operational 24/7. This requirement is particularly important in emergency scenarios, such as earthquakes, where the HAPs must perform real-time SAR imaging over disaster-stricken areas and immediately transmit sensing data to the ground BS. Such real-time capabilities are essential for enabling timely and accurate decision-making by rescue teams.




In this respect, based on the findings of \cite{javed2023interdisciplinary}, the propulsion power necessary to ensure that the HAPs maintains stable circular motion at time slot $n$ can be expressed as 
\begin{equation}
  \label{eq300}
\begin{split}
P_{\mathrm{Mot}}\left( n \right) =&\frac{1}{\cos ^2\left( \zeta \right) \eta _p\eta _e}\left[ \frac{1}{2}\rho \left( z\left( n \right) \right) V^3\left( n \right) 
\tilde{S}
C \right. 
\\
\,\,          &\qquad\qquad\qquad\qquad
\left. +\varepsilon \frac{2W^2}{\rho \left( z\left( n \right) \right) \tilde{S}V\left( n \right)} \right]
\end{split}
\end{equation}
where  $\eta _p, \eta _e$ represent the efficiencies of propeller and engine, respectively. Meanwhile, $\tilde{S}$ and $W$ are the total area of the wing and the weight of HAPs, respectively. The Zero-lift drag is defined by $C$ and the expression of coefficient is $
\varepsilon =1/\left( \pi eR_w \right) 
$ with $e$ and $R_w$ being the Oswald’s efficiency factor \cite{oswald1993efficient} and wing aspect ratio \cite{jones1940unsteady}, respectively.
 Besides, $\rho \left( z\left( n \right) \right)$ is the air density at an altitude $z\left( n \right)$, given by 
\begin{equation}
  \label{eq29}
\rho \left( z\left( n \right) \right)=\frac{p_h\left( z\left( n \right) \right)}{R_sT_p\left( z\left( n \right) \right)}
\end{equation}
with the expressions of $p_h\left( z\left( n \right) \right)$ and $T_p(z\left( n \right))$ \cite{atmospheric1976us} being
\begin{equation}
  \label{eq30}
p_h\left( z\left( n \right) \right) =p_{b1}\left( \frac{T_b}{T_b+L_b\left( z\left( n \right) -H_1 \right)} \right) ^{\frac{g\Bar{M}}{RL_b}},H_1\leqslant z\left( n \right) \leqslant H_2
\end{equation}
and 
\begin{equation}
    \label{key}
T_p(z\left( n \right) )=T_b+L_b\left( z\left( n \right) -H_1 \right) ,H_1\leqslant z\left( n \right) \leqslant H_2
\end{equation}

Here, all used symbols in  \eqref{eq29} and \eqref{eq30} are summarized in Table \ref{tab1}.

\begin{table}[htp]
	\centering
	\caption{Symbols  used in \eqref{eq29} and \eqref{eq30} }
	\label{tab1}
	\begin{tabular}{c|ccc}
		\hline
		{Symbol} &{Explanation} & Value & Unit
		\\	\cline{1-4}	
		$p_{b1}$	& base static pressures &5474.889 &	Pa
		\\
        $g$ & gravitational acceleration &9.8 &$
\mathrm{m}/\mathrm{s}^2
$
        \\
        $\Bar{M}$ &molar mass of Earth’s air &0.0289644 & $
\mathrm{kg}/\mathrm{mol}
$\\
        $T_b$ &base temperature &216.65&K
        \\
        $L_b$ &base temperature lapse rate &6.7 &K/km\\
        $R$ &universal gas constant &8.31432 &$
\mathrm{N}\cdot \mathrm{m}/\mathrm{mol}\cdot \mathrm{K}
$\\
        $R_s$ &specific gas constant &287.052 &$
\mathrm{J}/\mathrm{kg}\cdot \mathrm{K}
$\\
        $H_1$ &height (lower limit) &20&km\\
        $H_2$ &height (upper limit)&32&km\\
		\hline
	\end{tabular}
	\label{tab1}
\end{table}

Therefore, we can model the combined energy requirements for maintaining the HAPs platform's operations across its propulsion system, radar sensing capabilities, and communication tasks in the given time slot 
$n$, as
\begin{equation}
    \label{key}
P_{\mathrm{Total}}(n)=P_{\mathrm{Mot}}(n)+P_{\mathrm{Rad}}(n)+P_{\mathrm{Com}}(n).
\end{equation}

Next, we proceed to analyze the solar power $
P_{\mathrm{Har}}(n)
$ harvested by the solar panels of HAPs. Drawing upon the findings presented in \cite{reda2004solar,aglietti2009harnessing}, the expression for \( P_{\mathrm{Har}}(n) \) can be elegantly derived as\footnote{Nighttime operation implies zero energy harvesting, i.e., \(P_{\mathrm{Har}}(n) = 0\), which is not considered in this paper.}
\begin{equation}
  \label{eq355}
P_{\mathrm{Har}}(n)=\eta _hAI(z\left( n \right) )
\end{equation}
where the symbol $\eta _h$ denotes the energy conversion efficiency of the solar panel and \( A \) is the effective solar energy collection area of the solar panel. \( I(z(n)) \) represents the solar radiation intensity per unit area, which depends on factors such as the season, date/time, geographical location, and the operational altitude of the HAPs. However, this paper primarily investigates the impact of varying flight altitudes on \( I(z(n)) \), disregarding the influences of observer geographical latitude, solar declination, and local hour angle \cite{reda2004solar} on radiation intensity. Hence, based on the above analysis and the conclusions from \cite{reda2004solar,aglietti2009harnessing,javed2023interdisciplinary}, an approximate expression for \( I(z(n)) \) can be given by
\begin{equation}
    \label{key}
I(z\left( n \right) )=I_0\exp \left\{ -p_{\mathrm{R}}(z\left( n \right) )\alpha _{\mathrm{int}} \right\}. 
\end{equation}
Here, \( I_0 \) represents the standard solar constant at zero air mass, with a value of 1367 W/m². The extinction coefficient, denoted as 
$\alpha _{\mathrm{int}}=0.32$, represents the attenuation of light in clear sky conditions. This value is indicative of the combined effects of scattering and absorption by atmospheric molecules and aerosols under standard clear-sky scenarios. Moreover, The parameter \(p_{\mathrm{R}}(z\left( n \right) )\) is defined as the ratio between the atmospheric pressure at a given altitude \(z\left( n \right)\) and the pressure at ground level. It quantifies the relative pressure at altitude \(z\left( n \right)\) compared to the ground, and is expressed as $p_{\mathrm{R}}(z\left( n \right) )=\frac{p_h\left( z\left( n \right) \right)}{p_0}$.
Here, \(p_h\left( z\left( n \right) \right)\) is the pressure at altitude \(z\left( n \right)\) and
 \(p_0\) is the pressure at ground level, whose value is 101325Pa.

Based on the above analysis, we can easily deduce that the remaining power in time slot \(n\), after accounting for the power $P_{\mathrm{Har}}(n)$ obtained from solar energy, the power required to maintain HAPs aerodynamics $P_{\mathrm{Mot}}(n)$, the power used for radar sensing $P_{\mathrm{Rad}}(n)$, and the power used for communication back to the BS $P_{\mathrm{Com}}(n)$, can be expressed as
\begin{equation}
    \label{key}
    \begin{split}
P_{\mathrm{Re}}(n)=&P_{\mathrm{Har}}(n)-P_{\mathrm{Total}}(n)
\\
=&P_{\mathrm{Har}}(n)-P_{\mathrm{Mot}}(n)-P_{\mathrm{Rad}}(n)-P_{\mathrm{Com}}(n).
    \end{split}
\end{equation}

The variation in the battery energy carried by the HAPs from time slot \(n-1\) to time slot \(n\) can be formulated as
\begin{equation}
    \label{eq39}
C9:E\left( n \right) =E\left( n-1 \right) +\eta _aP_{\mathrm{Re}}(n)\bigtriangleup, 
\forall n\ne 1
\end{equation}
where the parameter $\eta _a$ is defined by 
\begin{equation}
    \label{key}
\eta _a=\left\{ \begin{array}{c}
	\eta _b,P_{\mathrm{Re}}(n)>0\\
	\eta _c,P_{\mathrm{Re}}(n)\leqslant 0\\
\end{array} \right. 
\end{equation}
with \(\eta_b\) and \(\eta_c\) representing the efficiencies of battery charging and discharging, respectively.

Constraint \(C9\) defines the energy allocation framework for the HAPs system, ensuring that energy usage remains within sustainable limits while maximizing overall system performance under available power constraints.


\section{Trajectory and Resource Allocation Optimization}
\label{sec3}
In this section, we first formulate the problem of maximizing the SAR-sensed ground area \( S \), subject to the HAPs trajectory and resource allocation constraints. We then develop a tractable solution approach  based on SCA to address this optimization problem.
\subsection{Objective Formulation}
The objective of this paper is to maximize the SAR ground coverage while satisfying the constraints associated with radar sensing, data transmission, and HAPs mobility. To this end, we jointly optimize the HAPs trajectory parameters, denoted by \( \{ \boldsymbol{z}, N \} \), and the power allocation variables, denoted by \( \{ \boldsymbol{P}_{\mathrm{Mot}}, \boldsymbol{P}_{\mathrm{Rad}}, \boldsymbol{P}_{\mathrm{Com}}, \boldsymbol{E} \} \). Specifically, \( \boldsymbol{z} = \{ z(n) \} \) represents the altitude profile of the HAPs, while \( \boldsymbol{P}_{\mathrm{Mot}}, \boldsymbol{P}_{\mathrm{Rad}}, \boldsymbol{P}_{\mathrm{Com}} \), and \( \boldsymbol{E} = \{ E(n) \} \) denote the time-varying power allocated for mobility, radar sensing, communication, and the remaining onboard energy, respectively, for each time slot \( n \). Based on these definitions, the overall optimization problem can be formulated as 
\begin{equation}
    \label{key}
    \begin{split}
&\left( P1 \right) :\underset{\boldsymbol{z},N,\boldsymbol{P}_{\mathrm{Mot}},\boldsymbol{P}_{\mathrm{Rad}},\boldsymbol{P}_{\mathrm{Com}},\boldsymbol{E}}{\max}\,\,S
\\
&s.t.  \quad C1-C9
\\
&C10: P_{\mathrm{Rad}}(n)=P_{\mathrm{Rad}}(n-1), n\in \left\{ \mathbb{A} \right\} \backslash\left\{ 1 \right\}, 
\\
&C11:0\leqslant P_{\mathrm{Rad}}(n)\leqslant P_{\mathrm{Rad}}^{\max},0\leqslant P_{\mathrm{Com}}(n)\leqslant P_{\mathrm{Com}}^{\max},\forall n,
\\
&C12:\mathrm{SNR}\left( n \right) >\mathrm{SNR}_{\min},\forall n,
\\
&C13:E\left( 1 \right) =E_{\mathrm{Ini}},\\
&C14: E\left( n \right) >0,
\forall n\ne 1
,\\
&C15:z_{\min}\leqslant z\left( n \right) \leqslant z_{\max}, n\in \mathbb{A} ,\\
&C16:
V_{\min}\leqslant V\left( n \right) \leqslant V_{\max},\forall n.
    \end{split}
\end{equation}
Constraints $C1-C5$ define the motion trajectory of the HAPs, 
while $C6$ and $C7$ give  the definition of the speed for the HAPs at time index. 
$C8$ ensures seamless and real-time transmission of the SAR raw data collected by the HAPs to the BS.  The energy limitation of HAPs is encapsulated by constraint $C9$.
Constraint \( C10 \) specifies that the radar's transmitted power remains constant at each moment, ensuring consistency in the radar's operational output. Constraint \( C11 \) indicates that the transmission power of both radar and communication systems must be non-negative and not exceed their respective maximum allowable transmission powers. Constraint \( C12 \) ensures that the echo $\mathrm{SNR}\left( n \right)$ at the HAPs satisfies the minimum SNR threshold required for successful imaging, denoted as \( \mathrm{SNR_{min}} \). Constraints \( C13 \) and \( C14 \) characterize the energy levels of the HAPs at the initial moment  and at time \( n \), respectively. $C15$ and $C16$ represent the operational altitude of HAPs and the speed limit of HAPs, respectively. It is important to note that \( z_{\text{min}} \), \( z_{\text{max}} \), and \( V_{\text{max}} \) are fixed constants, whereas \( V_{\text{min}} \)  denotes the stalling speed, requiring HAPs to sustain a speed no lower than this threshold at the current altitude \( z(n) \) to ensure adequate lift for steady-level flight. Its expression is given by $V_{\min}=\sqrt{\frac{2W}{p_h\left( z\left( n \right) \right) SC_{L}^{\max}}}$,
where $C_{L}^{\max}$ denotes the maximum lift coefficient.

The problem \( \left( P1 \right) \) is a joint non-convex MINLP problem under the given optimization parameters. While the integer constraint on \( N \) does not directly affect the objective function, it influences the dimensionality of other parameters. Consequently, \( N \) cannot be optimized jointly with the other parameters. Therefore, the non-convex problem \( \left( P1 \right) \), involving the integer variable $
N\in \mathbb{N} 
$, is difficult to solve.
 In light of this, this paper proposes a suboptimal solution  based on SCA to address problem \( \left( P1 \right) \).


\subsection{Solution for the Built Problem}
By fixing the variable \( N \) in the first step, we can optimize the problem \( \left( P1 \right) \). In this respect, for a given \( N \), \( \left( P2 \right) \) can be reformulated as 
\begin{equation}
\begin{aligned}
	&\left( P2 \right) :\underset{\boldsymbol{z},\boldsymbol{P}_{\mathrm{Mot}},\boldsymbol{P}_{\mathrm{Rad}},\boldsymbol{P}_{\mathrm{Com}},\boldsymbol{P}_{\mathrm{Har}},\boldsymbol{E}}{\max}\,\,S\\
	&s.t.\quad C1-C16.\\
\end{aligned}
\end{equation}
Even with $ N $  fixed, the presence of constraints $ C8 $, $ C9 $ and $C12 $, continues to render $\left( P2 \right) $ a non-convex problem. To address the non-convexity of problem  $ \left( P2 \right)  $, we adopt the SCA technique to transform it into a tractable sub-optimal convex problem. Furthermore, by proving that the variable $ N $ resides in a finite search space, the optimal value of $ N $ can be efficiently determined via exhaustive search.

We begin by reformulating constraint $C8$ into a convex form. Subsequently, it can be further simplified and equivalently rewritten as
\begin{equation}
    \label{eq41}
    \begin{split}
&C8:\left( \mathfrak{b} 2^{\mathfrak{a} z\left( n \right)}-1 \right) \left[ \left( x\left( n \right) -x_s \right) ^2+\left( y\left( n \right) -y_s \right) ^2 \right. 
\\
&\qquad\qquad\qquad\left. +\left( z\left( n \right) -z_s \right) ^2 \right] <\frac{\rho _0}{\sigma _{0}^{2}}P_{\mathrm{Com}}\left( n \right),\forall n, 
    \end{split}
\end{equation}
with $
\mathfrak{a} =\frac{2B_wPRF}{cB_c}\left( \frac{1}{\cos \alpha _1}-\frac{1}{\cos \alpha _2} \right) 
$ and $
\mathfrak{b} =2^{\frac{B_wT_pPRF}{B_c}}
$ being all positive. 

Denote 
\begin{align}
h_1\left( x\left( n \right) ,z\left( n \right) \right) =&\left( \mathfrak{b} 2^{\mathfrak{a} z\left( n \right)}-1 \right) \left( x\left( n \right) -x_s \right) ^2,
\\
h_2\left( y\left( n \right) ,z\left( n \right) \right) =&\left( \mathfrak{b} 2^{\mathfrak{a} z\left( n \right)}-1 \right) \left( y\left( n \right) -y_s \right) ^2,
\\
h_3\left( z\left( n \right) \right) =&\left( \mathfrak{b} 2^{\mathfrak{a} z\left( n \right)}-1 \right) \left( z\left( n \right) -z_s \right) ^2.
\end{align}

Hence, \eqref{eq41} can be rewritten as 
\begin{equation}
    \label{eq52}
    \begin{split}
        C8:h_1\left( x\left( n \right) ,z\left( n \right) \right) +&h_2\left( y\left( n \right) ,z\left( n \right) \right) +h_3\left( z\left( n \right) \right) \\
        &\qquad\qquad<\frac{\rho _0}{\sigma _{0}^{2}}P_{\mathrm{Com}}\left( n \right) ,\forall n.
    \end{split}
\end{equation}

Through the analysis of the Hessian matrix of the function \( h_1\left( x\left( n \right) ,z\left( n \right) \right) \), it can be concluded that \( h_1\left( x\left( n \right) ,z\left( n \right) \right) \) is not a convex function. However, we find that \( h_1\left( x\left( n \right) ,z\left( n \right) \right) \) can be expressed as the difference of convex functions (DC decomposition). Specifically, we have 
\begin{equation}
    \label{eq53}
    \begin{split}
h_1\left( x\left( n \right) ,z\left( n \right) \right) =&\frac{1}{2}\left[ g_1\left( z\left( n \right) \right) +g_2\left( x\left( n \right) \right) \right] ^2
\\
&-\frac{1}{2}\left[ g_{1}^{2}\left( z\left( n \right) \right) +g_{2}^{2}\left( x\left( n \right) \right) \right] , 
    \end{split}
\end{equation}
with $
g_1\left( z\left( n \right) \right) =\mathfrak{b} 2^{\mathfrak{a} z\left( n \right)}-1,g_2\left( x\left( n \right) \right) =\left( x\left( n \right) -x_s \right) ^2
$. Note that $
g_1\left( z\left( n \right) \right) 
$ and $
g_2\left( x\left( n \right) \right) 
$ are both convex functions. Building upon this, by employing the first-order Taylor expansion, the approximate representations of $
g_{1}^{2}\left( z\left( n \right) \right) 
$, $g_{2}^{2}\left( x\left( n \right) \right)$ around point $
\left\{ x^k\left( n \right) ,y^k\left( n \right), z^k\left( n \right) \right\} 
$
 are derived as
\begin{align}
    \label{key}
g_{1}^{2}\left( z\left( n \right) \right) &\approx g_{1}^{2}\left( z^k\left( n \right) \right) +\left[ g_{1}^{2}\left( z^k\left( n \right) \right) \right] ^{\prime}\left[ z\left( n \right) -z^k\left( n \right) \right], 
\\
g_{2}^{2}\left( x\left( n \right) \right) &\approx g_{2}^{2}\left( x^k\left( n \right) \right) +\left[ g_{2}^{2}\left( x^k\left( n \right) \right) \right] ^{\prime}\left[ x\left( n \right) -x^k\left( n \right) \right]. 
\end{align}
Here, the superscript $ \left( \cdot \right)^{\prime} $ signifies the operation of computing the first derivative of a function.

Similarly, we have 
\begin{align}
    \label{key}
    \begin{split}
      h_2\left( y\left( n \right) ,z\left( n \right) \right) =&\frac{1}{2}\left[ g_1\left( z\left( n \right) \right) +g_3\left( y\left( n \right) \right) \right] ^2
\\
&-\frac{1}{2}\left[ g_{1}^{2}\left( z\left( n \right) \right) +g_{3}^{2}\left( y\left( n \right) \right) \right], 
    \end{split}\\  
\begin{split}
   h_3\left( z\left( n \right) \right) =&\frac{1}{2}\left[ g_1\left( z\left( n \right) \right) +g_4\left( z\left( n \right) \right) \right] ^2
\\
&-\frac{1}{2}\left[ g_{1}^{2}\left( z\left( n \right) \right) +g_{4}^{2}\left( z\left( n \right) \right) \right]  
\end{split}
\end{align}
where $
g_3\left( y\left( n \right) \right) =\left( y\left( n \right) -y_s \right) ^2$ and
$
g_4\left( z\left( n \right) \right) =\left( z\left( n \right) -z_s \right) ^2
$. Moreover, the first-order Taylor expansions of \( g_3\left( y\left( n \right) \right)  \) and \( g_4\left( z\left( n \right) \right)  \) around the point \( 
\left\{ x^k\left( n \right) ,y^k\left( n \right), z^k\left( n \right) \right\} 
 \) can be elegantly given by
\begin{align}
    \label{key}
g_{3}^{2}\left( y\left( n \right) \right) \approx g_{3}^{2}\left( y^k\left( n \right) \right) 
+\left[ g_{3}^{2}\left( y^k\left( n \right) \right) \right] ^{\prime}\left[ y\left( n \right) -y^k\left( n \right) \right], \\
   g_{4}^{2}\left( z\left( n \right) \right) \approx g_{4}^{2}\left( z^k\left( n \right) \right) 
+\left[ g_{4}^{2}\left( z^k\left( n \right) \right) \right] ^{\prime}\left[ z\left( n \right) -z^k\left( n \right) \right]. \label{eq59}
\end{align}

Inserting \eqref{eq53}-\eqref{eq59} into \eqref{eq52} results in
\begin{equation}
    \label{key}
    \begin{split}
&\sum_{i=2}^4{\left[ g_1\left( \varsigma _{4}^{} \right) +g_i\left( \varsigma _i \right) \right] ^2}-3\left[ g_{1}^{2}\left( \varsigma _{4}^{k} \right) +\left[ g_{1}^{2}\left( \varsigma _{4}^{k} \right) \right] ^{\prime}\left( \varsigma _{4}^{}-\varsigma _{4}^{k} \right) \right] 
\\
&-\sum_{i=2}^4{\left\{ g_{i}^{2}\left( \varsigma _{i}^{k} \right) +\left[ g_{i}^{2}\left( \varsigma _{i}^{k} \right) \right] ^{\prime}\left( \varsigma _{i}^{}-\varsigma _{i}^{k} \right) \right\}}<\frac{2\rho _0}{\sigma _{0}^{2}}P_{\mathrm{Com}}\left( n \right) ,\forall n
    \end{split}
\end{equation}
with $
\varsigma _{2}^{}=x\left( n \right) ,\varsigma _{3}^{}=y\left( n \right)$  and $\varsigma _{4}^{}=z\left( n \right) .
$ To further streamline the resolution of the optimization problem $\left( P2 \right)$, this paper introduces three slack vectors, \( 
\boldsymbol{p}=\left[ p\left( 1 \right) ,p\left( 2 \right) ,\cdots ,p\left( MN \right) \right] ^T
\in \mathbb{R} ^{MN}\), \( 
\boldsymbol{q}=\left[ q\left( 1 \right) ,q\left( 2 \right) ,\cdots ,q\left( MN \right) \right] ^T\in \mathbb{R} ^{MN}
 \) and \( 
\boldsymbol{r}=\left[ r\left( 1 \right) ,r\left( 2 \right) ,\cdots ,r\left( MN \right) \right] ^T\in \mathbb{R} ^{MN}
 \), at this juncture. Namely, the constraint $C8$ can be reformulated as
\begin{align}
    \label{key}
    \begin{split}
&\widetilde{C8}:p^2\left( n \right) +q^2\left( n \right) +r^2\left( n \right) -3\left[ g_{1}^{2}\left( \varsigma _{4}^{k} \right) +\left[ g_{1}^{2}\left( \varsigma _{4}^{k} \right) \right] ^{\prime}\left( \varsigma _{4}^{}-\varsigma _{4}^{k} \right) \right] 
\\
&-\sum_{i=2}^4{\left\{ g_{i}^{2}\left( \varsigma _{i}^{k} \right) +\left[ g_{i}^{2}\left( \varsigma _{i}^{k} \right) \right] ^{\prime}\left( \varsigma _{i}^{}-\varsigma _{i}^{k} \right) \right\}}<\frac{2\rho _0}{\sigma _{0}^{2}}P_{\mathrm{Com}}\left( n \right) ,\forall n,
    \end{split}  \\  &\widetilde{C8a}:g_1\left( z\left( n \right) \right) +g_2\left( x\left( n \right) \right) \leqslant p\left( n \right) ,\forall n,
\\
&\widetilde{C8b}:g_1\left( z\left( n \right) \right) +g_3\left( y\left( n \right) \right) \leqslant q\left( n \right) ,\forall n,
\\
&\widetilde{C8c}:g_1\left( z\left( n \right) \right) +g_4\left( z\left( n \right) \right) \leqslant r\left( n \right) ,\forall n.
\end{align}
    
Now, the subproblem \( \left(P2 \right) \) can thus be approximated as the following optimization problem
\begin{equation}
    \label{key}
\begin{aligned}
	&\left( {P}3 \right) :\underset{\boldsymbol{z},\boldsymbol{P}_{\mathrm{Mot}},\boldsymbol{P}_{\mathrm{Rad}},\boldsymbol{P}_{\mathrm{Com}},\boldsymbol{P}_{\mathrm{Har}},\boldsymbol{E}}{\max}\,\,S\\
	&s.t.\quad C1-C7,\widetilde{C8},\widetilde{C8a},\widetilde{C8b},\widetilde{C8c},C9-C16.\\
\end{aligned}
\end{equation}

Note that problem \( (P3) \) remains non-convex due to the presence of constraints \( C9 \) and \( C12 \). In the following, we focus on addressing the non-convexity introduced by constraint \( C9 \). Accordingly, constraint \( C9 \) can be equivalently reformulated as:
 
\begin{equation}
    \label{eq63}
\begin{split}
	\overrightarrow{C9}:&P_{\mathrm{Mot}}(n)-P_{\mathrm{Har}}(n)=\frac{E\left( n-1 \right)}{\bigtriangleup \eta _a}\\
	&-\frac{E\left( n \right)}{\bigtriangleup \eta _a}-P_{\mathrm{Rad}}(n)-P_{\mathrm{Com}}(n),\forall n\ne 1.\\
\end{split}
\end{equation}

At this stage, our attention can be directed toward \( P_{\mathrm{Mot}}(n) \). Accordingly, define 
\begin{align}
    \label{}
h_4\left( z\left( n \right) ,V\left( n \right) \right) =&\frac{1}{2\cos ^2\left( \zeta \right) \eta _p\eta _e}\rho \left( z\left( n \right) \right) V^3\left( n \right) SC,
\\
\label{}
h_5\left( z\left( n \right) ,V\left( n \right) \right) =&\frac{2\varepsilon W^2}{\cos ^2\left( \zeta \right) \eta _p\eta _e\rho \left( z\left( n \right) \right) SV\left( n \right)}.
\end{align}

Furthermore, utilizing the results \cite{javed2023interdisciplinary} clearly reveals that \( h_4(z(n), V(n)) \) and \( h_5(z(n), V(n)) \) are jointly nonconvex functions with respect to $z(n)$ and $V(n)$, respectively. Nevertheless, both \( h_4(z(n), V(n)) \) and \( h_5(z(n), V(n)) \) can be reformulated within the framework of DC programming, that is
\begin{align}
    \label{eq66}
    \begin{split}
h_4\left( z\left( n \right) ,V\left( n \right) \right) =&\frac{1}{2}\left[ g_{5}^{}\left( z\left( n \right) \right) +g_{6}^{}\left( V\left( n \right) \right) \right] ^2
\\
&-\frac{1}{2}\left[ g_{5}^{2}\left( z\left( n \right) \right) +g_{6}^{2}\left( V\left( n \right) \right) \right], 
    \end{split}\\
     \label{eq67}
    \begin{split}
h_5\left( z\left( n \right) ,V\left( n \right) \right) =&\frac{1}{2}\left[ g_7\left( z\left( n \right) \right) +g_{8}^{}\left( V\left( n \right) \right) \right] ^2
\\
&-\frac{1}{2}\left[ g_{7}^{2}\left( z\left( n \right) \right) +g_{8}^{2}\left( V\left( n \right) \right) \right] 
    \end{split}
\end{align}
where $
g_5\left( z\left( n \right) \right) =\rho \left( z\left( n \right) \right)$, $g_6\left( V\left( n \right) \right) =\frac{SCV^3\left( n \right)}{2\cos ^2\left( \zeta \right) \eta _p\eta _e}$, $g_7\left( z\left( n \right) \right) =\frac{1}{\rho \left( z\left( n \right) \right)}$ and $g_8\left( V\left( n \right) \right) =\frac{2\varepsilon W^2}{\cos ^2\left( \zeta \right) \eta _p\eta _eSV\left( n \right)}
$. As demonstrated in Appendix A of \cite{javed2023interdisciplinary}, the functions \( g_5\left( z(n) \right) \), \( g_6\left( V(n) \right) \), \( g_7\left( z(n) \right) \), and \( g_8\left( V(n) \right) \) are all proven to be convex. Furthermore, by applying a first-order Taylor expansion, the functions \( g_5\left( z\left( n \right) \right) \), \( g_6\left( V\left( n \right) \right) \), \( g_7\left( z\left( n \right) \right) \), and \( g_8\left( V\left( n \right) \right) \) can be locally linearized around the point \( \left( z^k\left( n \right), V^k\left( n \right) \right) \), as
\begin{equation}
    \label{eq68}
g_{i}^{2}\left( \upsilon _i \right) \approx g_{i}^{2}\left( \upsilon _{i}^{k} \right) +\left[ g_{i}^{2}\left( \upsilon _{i}^{k} \right) \right] ^{\prime}\left[ \upsilon _{i}^{}-\upsilon _{i}^{k} \right] ,i=5,6,7,8,
\end{equation}
with $
\upsilon _5=\upsilon _7=z\left( n \right) $ and $\upsilon _6=\upsilon _8=V\left( n \right) 
$. 

Subsequently, inserting \eqref{eq66}-\eqref{eq68} into \eqref{eq63} gives 
\begin{equation}
    \label{eq69}
    \begin{split}
&
\widetilde{C9}:\frac{1}{2}\left[ t^2\left( n \right) +u^2\left( n \right) \right] -P_{\mathrm{Har}}(n)+P_{\mathrm{Rad}}(n)
\\
&\qquad-\frac{1}{2}\sum_{i=5}^8{\left[ g_{i}^{2}\left( \upsilon _{i}^{k} \right) +\left[ g_{i}^{2}\left( \upsilon _{i}^{k} \right) \right] ^{\prime}\left[ \upsilon _{i}^{}-\upsilon _{i}^{k} \right] \right] \qquad}
\\
&\qquad\leqslant \frac{E\left( n-1 \right)}{\bigtriangleup \eta _a}-\frac{E\left( n \right)}{\bigtriangleup \eta _a}-P_{\mathrm{Com}}(n),\forall n\ne 1.
    \end{split}
\end{equation}

Here,  \( t(n) \) and \( u(n) \) are components of the slack vectors  
$
\boldsymbol{t} = \left[ t(1), t(2), \cdots, t(MN) \right]^T \in \mathbb{R}^{MN}$ and $
\boldsymbol{u} = \left[ u(1), u(2), \cdots, u(MN) \right]^T \in \mathbb{R}^{MN}.
$
It is important to emphasize that \( \boldsymbol{t} \) and \( \boldsymbol{u} \) serve as slack variables introduced to enhance the tractability of the optimization problem \( (P3) \), and they are subject to 
\begin{align}
    \label{}
\widetilde{C9a}:&g_{5}^{}\left( z\left( n \right) \right) +g_{6}^{}\left( V\left( n \right) \right) \leqslant t\left( n \right) ,
\\
\label{}
\widetilde{C9b}:&g_7\left( z\left( n \right) \right) +g_{8}^{}\left( V\left( n \right) \right) \leqslant u\left( n \right) .
\end{align}

Therefore, the optimization problem $\left( P3 \right) $ can be approximated as 
\begin{equation}
    \label{}
\begin{aligned}
	&\left( P4 \right) :\underset{\boldsymbol{z},\boldsymbol{P}_{\mathrm{Mot}},\boldsymbol{P}_{\mathrm{Rad}},\boldsymbol{P}_{\mathrm{Com}},\boldsymbol{P}_{\mathrm{Har}},\boldsymbol{E}}{\max}\,\,S\\
	&s.t.\quad C1-C7,\widetilde{C8},\widetilde{C8a},\widetilde{C8b},\widetilde{C8c},\\&\qquad\quad\widetilde{C9},\widetilde{C9a},\widetilde{C9b},C10-C16.\\
\end{aligned}
\end{equation}

It is worth noting that constraint \( {C12} \) renders the optimization problem \( (P4) \) non-convex. In this regard, our primary focus is on reformulating constraint $C12$ into a convex form to facilitate efficient optimization. Specifically,
 because the constraint $C12$ involves a cubic polynomial in the HAPs altitude $\boldsymbol{z}$, we can employ second-order cone programming (SOCP) to reformulate $C12$ accordingly. Hence, two slack variables $
\varphi \left( n \right) ,\psi \left( n \right) 
$ are introduced here in vectors $
\boldsymbol{\varphi }=\left[ \varphi \left( 1 \right) ,\cdots ,\varphi \left( MN \right) \right] ^T\in \mathbb{R} ^{MN\times 1}
$ and $
\boldsymbol{\psi }=\left[ \psi \left( 1 \right) ,\cdots ,\psi \left( MN \right) \right] ^T\in \mathbb{R} ^{MN\times 1}
$ to rewrite the constraint C12 as  \cite{boyd2004convex}
\begin{align}
\exists \varphi \left( n \right) \geqslant 0\in \mathbb{R} ,&\widetilde{C12a}: \left[ \begin{matrix}
	\varphi \left( n \right)&		z\left( n \right)\\
	z\left( n \right)&		1\\
\end{matrix} \right] \succeq \mathbf{0},\forall n
\\
\exists \psi \left( n \right) \geqslant 0\in \mathbb{R} ,&\widetilde{C12b}: \left[ \begin{matrix}
	\psi \left( n \right)&		\varphi \left( n \right)\\
	\varphi \left( n \right)&		z\left( n \right)\\
\end{matrix} \right] \succeq \mathbf{0},\forall n
\\
\,\,                   &\widetilde{C12c}:\left[ \begin{matrix}
	P_{\mathrm{Rad}}\left[ n \right] \mathrm{con.}&		V\left( n \right)\\
	\psi \left( n \right)&		1\\
\end{matrix} \right] \succeq \mathbf{0},\forall n
\end{align}
where the symbol $\mathrm{con.}$ denoting the constant, whose expression is $
\mathrm{con.}=\frac{G_tG_r\lambda ^3\sigma _0cT_p\mathrm{PRF}\cos ^2\beta}{256\pi ^3KTF_nB_wL_s\mathrm{SNR}_{\min}}
$.

As a result of the aforementioned transformations, the subproblem \( \left( P1 \right) \) can be equivalently reformulated into a convex optimization problem, denoted as \( \left( P5 \right) \), namely
\begin{equation}
    \label{}
    \begin{split}
      &\left( P5 \right) :\underset{\boldsymbol{z},\boldsymbol{P}_{\mathrm{Mot}},\boldsymbol{P}_{\mathrm{Rad}},\boldsymbol{P}_{\mathrm{Com}},\boldsymbol{P}_{\mathrm{Har}},\boldsymbol{E}}{\max}\,\,S
\\
&s.t.\quad C1-C7,\widetilde{C8},\widetilde{C8a},\widetilde{C8b},\widetilde{C8c},
\\
&\qquad\widetilde{C9},\widetilde{C9a},\widetilde{C9b},C10-C11,
\\
&
\qquad\widetilde{C12a},\widetilde{C12b},\widetilde{C12c}
,C13-C16.   
    \end{split}
\end{equation}
\subsection{Method for optimal Number $
N^*
$}
To tackle Problem $\left( P1 \right)$, we proceed in two steps. First, we solve sub-problem $\left( P5 \right)$ for every admissible fixed value of \(N\). Next, from the resulting solutions we select the smallest \(N\) that still attains the maximum coverage area, thereby minimizing energy consumption. We can now invoke the following proposition to rigorously demonstrate that \( N \) admits only a finite number of feasible values. Consequently, the optimal \( N^* \) can be efficiently determined by performing a finite search.
\begin{proposition}
The sweep $N$ is subject to an upper bound.
\end{proposition}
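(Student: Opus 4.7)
The plan is to upper-bound $N$ by combining the kinematic constraints on the HAPs platform with a strictly positive physical floor on the discretization step $\Delta$. First, I would invoke $C6$ together with $C16$: since $V(n)=\sqrt{x(n)\,g\tan\zeta}\le V_{\max}$ for every $n\in\mathbb{A}$, the starting radius of each sweep satisfies $x(n)\le V_{\max}^{2}/(g\tan\zeta):=R_{\max}$, and $C3$, $C4$, $C7$ propagate this bound to all time indices within the same sweep. Hence for a circular orbit at radius $R$, the period is $T=2\pi R/V=2\pi\sqrt{R/(g\tan\zeta)}$ and is bounded above by $T_{\max}:=2\pi V_{\max}/(g\tan\zeta)$.

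Next, I would argue that $\Delta=T/N$ admits a strictly positive physical lower bound $\Delta_{\min}$. The natural choice is $\Delta_{\min}=1/\mathrm{PRF}$, because the SAR data-rate expression $D_{\min}(n)$ is predicated on each time slot accommodating at least one pulse-repetition interval; analogous positive floors arise from the stall-speed azimuth-sampling requirement and from the minimum control-loop granularity implicit in the motion model of $C1$–$C5$. Either choice suffices to keep $\Delta$ away from zero.

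Combining the two bounds gives $N=T/\Delta\le T_{\max}/\Delta_{\min}\le 2\pi V_{\max}\cdot\mathrm{PRF}/(g\tan\zeta)$, a finite constant that depends only on pre-specified system parameters. This shows that the feasible set for the integer $N$ is finite, which is exactly what the subsequent finite-search step of the SCA framework requires.

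The principal obstacle is the justification of $\Delta_{\min}>0$: purely algebraically the constraint system $C1$–$C16$ remains satisfied as $\Delta\downarrow 0$ and $N\uparrow\infty$, so the upper bound cannot be obtained from the MINLP alone. I would therefore frame the proof as two short lemmas, one kinematic (bounding $T$) and one operational (bounding $\Delta$ from below via the SAR pulse model), and then state the concluding inequality. A secondary care point is to verify the bound uniformly across all $M$ sweeps; because $x(n)\le R_{\max}$ holds slotwise rather than only for the first sweep, $T_{\max}$ is indeed an across-sweep bound, so no extra work is required there.
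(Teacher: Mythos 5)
Your argument does not follow the paper's route, and it contains the gap you yourself identify: it requires a strictly positive floor $\Delta_{\min}$ on the slot duration, and no such floor is implied by the constraint set $C1$--$C16$. Importing $\Delta \ge 1/\mathrm{PRF}$ from the SAR pulse model is an extra operational assumption grafted onto the MINLP, not a consequence of it, so the chain $N = T/\Delta \le T_{\max}/\Delta_{\min}$ is not a proof of the proposition as stated. (Your kinematic half is fine as far as it goes: $C6$ and $C16$ do give $x(n)\le V_{\max}^2/(g\tan\zeta)$ and hence a bounded orbit period; but bounding $T$ alone does nothing to bound $N$ once $\Delta$ can shrink.)

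The paper closes the argument with an entirely different resource: energy rather than time. Starting from the energy-update constraint $C9$ (in the form $\overrightarrow{C9}$), summing over $n=2,\dots,NM$ and telescoping, together with $C13$ ($E(1)=E_{\mathrm{Ini}}$) and $C14$ ($E(n)>0$), yields
\begin{equation*}
\sum_{n=2}^{NM}\left[ P_{\mathrm{Mot}}(n)-P_{\mathrm{Har}}(n) \right] \leqslant \frac{E_{\mathrm{Ini}}}{\Delta\,\eta_a}.
\end{equation*}
Since altitude and velocity are constant within each sweep ($C5$, $C7$), the left-hand side is $N$ times a per-sweep net power deficit, which is assumed strictly positive (propulsion exceeds harvesting, as the paper footnotes); this immediately caps $N$ by $E_{\mathrm{Ini}}/\bigl(\Delta\eta_a\sum_{m=1}^{M}[P_{\mathrm{Mot}}(m)-P_{\mathrm{Har}}(m)]\bigr)$, and the paper then minimizes the denominator via the auxiliary problems $(P6)$/$(P7)$ to obtain a computable bound. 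Note that this argument treats $\Delta$ as a fixed design parameter and needs no lower bound on it, which is precisely the piece your proof cannot supply. If you want to salvage your approach, you would have to elevate $\Delta\ge\Delta_{\min}$ to an explicit modeling assumption; the energy-based route avoids that entirely, at the cost of its own (physically mild) assumption that the per-sweep net power draw is positive.
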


\begin{proof}

By leveraging constraint $C11$, a lower bound for the right-hand side of constraint $\overrightarrow{C9}$ can readily be established as
\begin{equation}
    \label{}
P_{\mathrm{Mot}}(n)-P_{\mathrm{Har}}(n)\leqslant \frac{E\left( n-1 \right)}{\bigtriangleup \eta _a}-\frac{E\left( n \right)}{\bigtriangleup \eta _a},\forall n\ne 1.
\end{equation}

Subsequently, by summing both sides of the inequality, gives
\begin{equation}
    \label{eq78}
    \sum_{n=2}^{NM}{\left[ P_{\mathrm{Mot}}(n)-P_{\mathrm{Har}}(n) \right]}\leqslant \sum_{n=2}^{NM}{\left[ \frac{E\left( n-1 \right)}{\bigtriangleup \eta _a}-\frac{E\left( n \right)}{\bigtriangleup \eta _a} \right]}.
\end{equation}

Observing \eqref{eq78}, the expression on the right-hand side can be rewritten as 
\begin{equation}
    \label{eq79}
\sum_{n=2}^{NM}{\left[ \frac{E\left( n-1 \right)}{\bigtriangleup \eta _a}-\frac{E\left( n \right)}{\bigtriangleup \eta _a} \right]}=\frac{E\left( 1 \right)}{\bigtriangleup \eta _a}-\frac{E\left( NM \right)}{\bigtriangleup \eta _a}.
\end{equation}

Inserting \eqref{eq79} into \eqref{eq78} results in
\begin{equation}
    \label{}
\sum_{n=2}^{NM}{\left[ P_{\mathrm{Mot}}(n)-P_{\mathrm{Har}}(n) \right]}\leqslant \frac{E\left( 1 \right)}{\bigtriangleup \eta _a}-\frac{E\left( NM \right)}{\bigtriangleup \eta _a}\leqslant \frac{E_{\mathrm{Ini}}}{\bigtriangleup \eta _a}.
\end{equation}

It is worth noting that during each sweep, the altitude and velocity remain constant, leading to fixed values for both power consumption and energy harvesting. Accordingly, we can readily derive\footnote{The validity of \eqref{eq81} is based on the condition that the energy consumed for propulsion exceeds the energy harvested by the solar panels. This condition is practically reasonable, particularly during periods of low solar irradiance, such as early morning or late evening.
}
\begin{equation}
    \label{eq81}
N\leqslant \frac{E_{\mathrm{Ini}}}{\bigtriangleup \eta _a\sum_{m=1}^{M}{\left[ P_{\mathrm{Mot}}(m)-P_{\mathrm{Har}}(m) \right]}}.
\end{equation}

\eqref{eq81} suggests that a feasible upper bound for \(N\) can be determined by solving for the minimum value of $\sum_{m=1}^{M}{\left[ P_{\mathrm{Mot}}(m)-P_{\mathrm{Har}}(m) \right]}$. Hence, we can build the following problem, as
\begin{equation}
    \label{}
    \begin{split}
&\left( P6 \right) :\underset{z\left( m \right) ,V\left( m \right) ,\tilde{t}}{\min}\tilde{t}
\\
&s.t.C15:z_{\min}\leqslant z\left( m \right) \leqslant z_{\max},
\\
&\quad C16:V_{\min}\leqslant V\left( m \right) \leqslant V_{\max},
\\
&\quad C17:\sum_{m=1}^{M}{\left[ P_{\mathrm{Mot}}(m)-P_{\mathrm{Har}}(m) \right]}\leqslant \tilde{t}.
    \end{split}
\end{equation}

The non-convexity of problem $\left( P6 \right)$ arises from the presence of constraint $C17$. To tackle this challenge, we adopt the SCA method, as previously employed. In this manner, problem $\left( P6 \right)$ can be equivalently reformulated into the following convex problem $\left( P7 \right)$
\begin{equation}
    \label{}
    \begin{split}
	&\left( P7 \right) :\underset{z\left( m \right) ,V\left( m \right) ,\tilde{t}}{\min}\tilde{t}\\
	\,\,\\
&
s.t. \widetilde{C9a},\widetilde{C9b},C15,C16,
\\
&
\quad\widetilde{C17}:\sum_{m=1}^{M}{\left[ \frac{1}{2}\left[ t^2\left( m \right) +u^2\left( m \right) \right]  -P_{\mathrm{Har}}(m)\right.}
\\
&
\qquad   - \left.\frac{1}{2}\sum_{i=5}^8{\left[ g_{i}^{2}\left( \upsilon _{i}^{k} \right) +\left[ g_{i}^{2}\left( \upsilon _{i}^{k} \right) \right] ^{\prime}\left[ \upsilon _{i}^{}-\upsilon _{i}^{k} \right] \right]} \right]\leqslant \tilde{t} 
\\
    \end{split}
\end{equation}

An upper bound for the sweep number $N$ can be further determined by solving the optimization problem (P7).
Thus, the proposition is proved.
\end{proof}

To address problem $\left( P1 \right)$, the SCA method is employed to iteratively solve the subproblems $\left( P5 \right)$ and $\left( P7 \right)$, as outlined in Algorithm 1. Both subproblems can be efficiently handled using the conventional solvers, for example the CVX \cite{grant2014cvx}.

\begin{algorithm}
\caption{Overall Optimization Framework Based on SCA}
\label{alg1}
\begin{algorithmic}[1]
\STATE \textbf{Initialization:} $
k=1
$, and the initial point $
\left( \boldsymbol{z}_{\mathrm{}}^{1},\boldsymbol{V}^{1} \right)
\in \mathbb{R} ^{M\times 2}
$, obtain the upper bound $N^{1}$ by solving \( \left( P7 \right) \)  around the point $
\left( \boldsymbol{z}_{\mathrm{}}^{1},\boldsymbol{V}^{1} \right) 
$, $
\xi_1 =\infty 
$, $
\xi_2 =\infty 
$, \textbf{Set} tolerance $
\delta_1 
$, $
\delta_2 
$
\WHILE{$
\xi_1 \geqslant \delta_1 
$}
\STATE Set $k=k+1$
\STATE Get the upper bound $N^{k}$ and solutions $
\left( \boldsymbol{z}_{\mathrm{}}^{k},\boldsymbol{V}^{k} \right) 
$ by solving \( \left( P7 \right) \) around the point $
\left( \boldsymbol{z}_{\mathrm{}}^{k-1},\boldsymbol{V}^{k-1} \right) 
$

\STATE Update $
\xi _1=\left| \frac{N^{ k }-N^{ k-1 }}{N^{k }} \right|
$
\ENDWHILE
\STATE Set $N^{up}=N^k$, 
$
S^*=-\infty $
\FOR{$N=1$ to $N^{up}$}
\STATE Set $k=1$, and the initial point $
\left( \boldsymbol{x}^1,\boldsymbol{y}_{}^{1},\boldsymbol{z}_{}^{1},\boldsymbol{V}^1 \right) \in \mathbb{R} ^{N\times 4}
$, calculate  the  coverage
$S^{1}$ by solving \( \left( P5 \right) \) around the point $
\left( \boldsymbol{x}^1,\boldsymbol{y}_{}^{1},\boldsymbol{z}_{}^{1},\boldsymbol{V}^1 \right) \in \mathbb{R} ^{N\times 4}
$
\REPEAT 
\STATE Set $k=k+1$
\STATE Get the coverage $S^{k}$ and solutions $
\left( \boldsymbol{x},\boldsymbol{y},\boldsymbol{z},\boldsymbol{V},\boldsymbol{P}_{\mathrm{Mot}},\boldsymbol{P}_{\mathrm{Rad}},\boldsymbol{P}_{\mathrm{Com}},\boldsymbol{P}_{\mathrm{Har}},\boldsymbol{E} \right) 
$ by solving \( \left( P5 \right) \) around the point $
\left( \boldsymbol{x}^{k-1},\boldsymbol{y}_{\mathrm{}}^{k-1},\boldsymbol{z}_{\mathrm{}}^{k-1},\boldsymbol{V}^{k-1} \right) 
$
\STATE Set $
\left( \boldsymbol{x}^k,\boldsymbol{y}^k,\boldsymbol{z}^k,\boldsymbol{V}^k \right) =\left( \boldsymbol{x},\boldsymbol{y},\boldsymbol{z},\boldsymbol{V} \right) 
$
\STATE Computate $
\xi _2=\left| \frac{S^{k }-S^{ k-1 }}{S^{k }} \right|
$
\UNTIL{$
\xi _2\leqslant \delta _2
$}
\STATE Set $
\xi_2 =\infty 
$

\IF{$S^{ k}> S^{ *}$}
        \STATE Set $S^{ *} = S^{ k}$
        \STATE Set $
\left( \boldsymbol{x}^*,\boldsymbol{y}^*,\boldsymbol{z}^*,\boldsymbol{V}^*,\boldsymbol{P}_{\mathrm{Mot}}^{*},\boldsymbol{P}_{\mathrm{Rad}}^{*},\boldsymbol{P}_{\mathrm{Com}}^{*},\boldsymbol{P}_{\mathrm{Har}}^{*},\boldsymbol{E}^* \right) =\left( \boldsymbol{x}^k,\boldsymbol{y}^k,\boldsymbol{z}^k,\boldsymbol{V}^k,\boldsymbol{P}_{\mathrm{Mot}}^{k},\boldsymbol{P}_{\mathrm{Rad}}^{k},\boldsymbol{P}_{\mathrm{Com}}^{k},\boldsymbol{P}_{\mathrm{Har}}^{k},\boldsymbol{E}^k \right) 
$

\ENDIF
\ENDFOR
\RETURN $
S^*,\left( \boldsymbol{x}^*,\boldsymbol{y}^*,\boldsymbol{z}^*,\boldsymbol{V}^*,\boldsymbol{P}_{\mathrm{Mot}}^{*},\boldsymbol{P}_{\mathrm{Rad}}^{*},\boldsymbol{P}_{\mathrm{Com}}^{*},\boldsymbol{P}_{\mathrm{Har}}^{*},\boldsymbol{E}^* \right) 
$
\end{algorithmic}
\end{algorithm}
\section{Simulation Results}
\label{sec4}

In this section, we employ numerical simulations to rigorously validate the proposed approach, focusing on three-dimensional HAPs trajectory planning, resource allocation, and algorithm convergence. Unless otherwise specified, the system parameters are listed in Table \ref{tab2}. 
\begin{table}[htp]
	\centering
	\caption{System  Parameters}
	\label{tab1}
	\begin{tabular}{cc|cc}
		\hline
		{Parameter} &{Value} & Parameter & Value
		\\	\cline{1-4}	
		$M$	& 3 &$f$ &	2GHz
        		\\	
		$G_t$	& 20dB  &$G_r$ &	20dB
		\\
        $T_p$ & $10^{-6}$ &PRF &1000
        \\
        $B_w$ &100MHz &$B_c$ & 100MHz\\
        $
\left( x_s,y_s,z_s \right) 
$ &$
\left( 0,0,0 \right) 
$ &$\alpha$&$15^\circ$
        \\
        $\beta$ &$25^\circ$ &$P_{\mathrm{Rad}}^{\max}$&46dBm\\
        $P_{\mathrm{Com}}^{\max}$ &40dBm &$\mathrm{SNR}_{\min}$ &-10dB\\
		\hline
	\end{tabular}
	\label{tab2}
\end{table}

To the best of the authors’ knowledge, this work represents the first attempt to jointly investigate resource allocation and trajectory optimization for HAPs-enabled SAR systems. Due to the absence of established benchmark schemes in the existing literature, we construct two baseline approaches to facilitate comparative performance evaluation, in line with the practices adopted in related works \cite{lahmeri2022trajectory}:

\textbf{Benchmark 1}: The radar transmit power is fixed to the optimal value obtained from the proposed method. Under this setting, problem (P5) is re-solved to determine the corresponding benchmark performance.

\textbf{Benchmark 2}: The communication transmit power is fixed to the optimal value derived from the proposed algorithm. With this assumption, problem (P5) is re-optimized to obtain Benchmark 2.

\begin{figure}[htp]
	\centering
	\subfigure[]{
		{\includegraphics[width=0.38\textwidth]{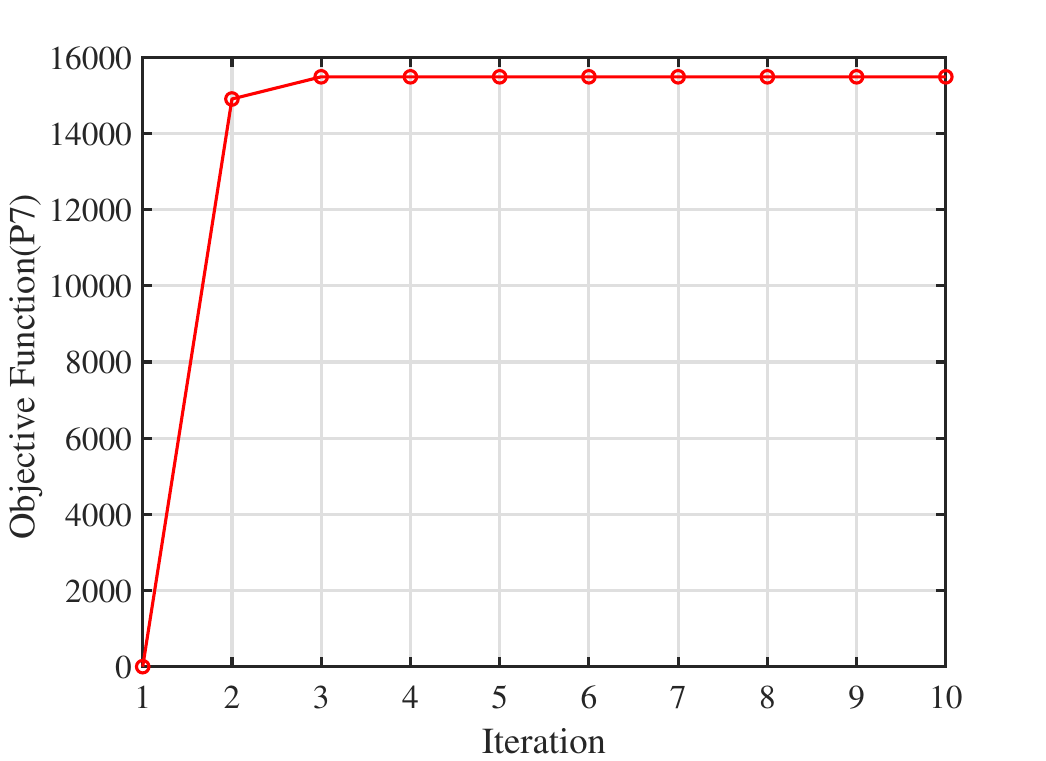}}}
	\subfigure[]{
		{\includegraphics[width=0.38\textwidth]{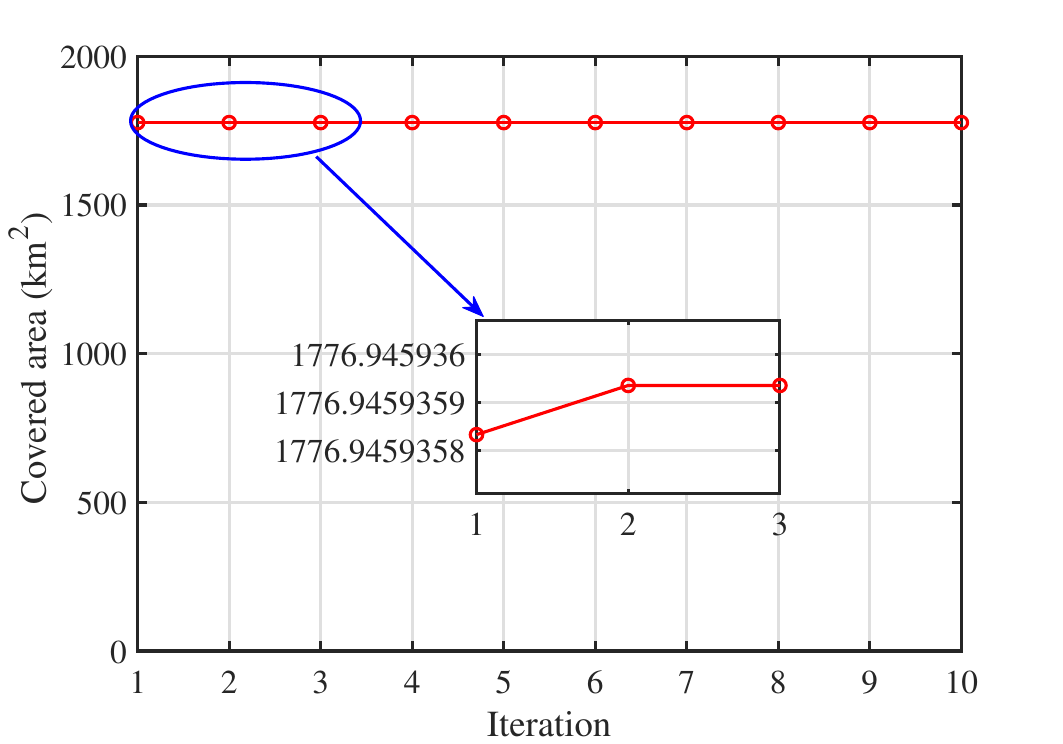}}}
	\caption{Convergence of algorithm \ref{alg1}. (a) Convergence of Problem (P7), (b) Convergence of Problem (P5).}
	\label{fig4}
\end{figure}
Fig. \ref{fig4} presents the convergence behavior of Algorithm 1, where Figs. \ref{fig4}(a) and 4(b) illustrate the convergence trajectories corresponding to optimization problems (P7) and (P5), respectively. Since the optimization over $N$ involves a finite search, it suffices to demonstrate the convergence of (P7) and (P5) to ensure the convergence of Algorithm \ref{alg1}. The results clearly indicate that the proposed algorithm converges rapidly, confirming the efficiency and effectiveness of the proposed optimization framework.

\begin{figure}[htp]
	\centering
	\subfigure{
		{\includegraphics[width=0.45\textwidth]{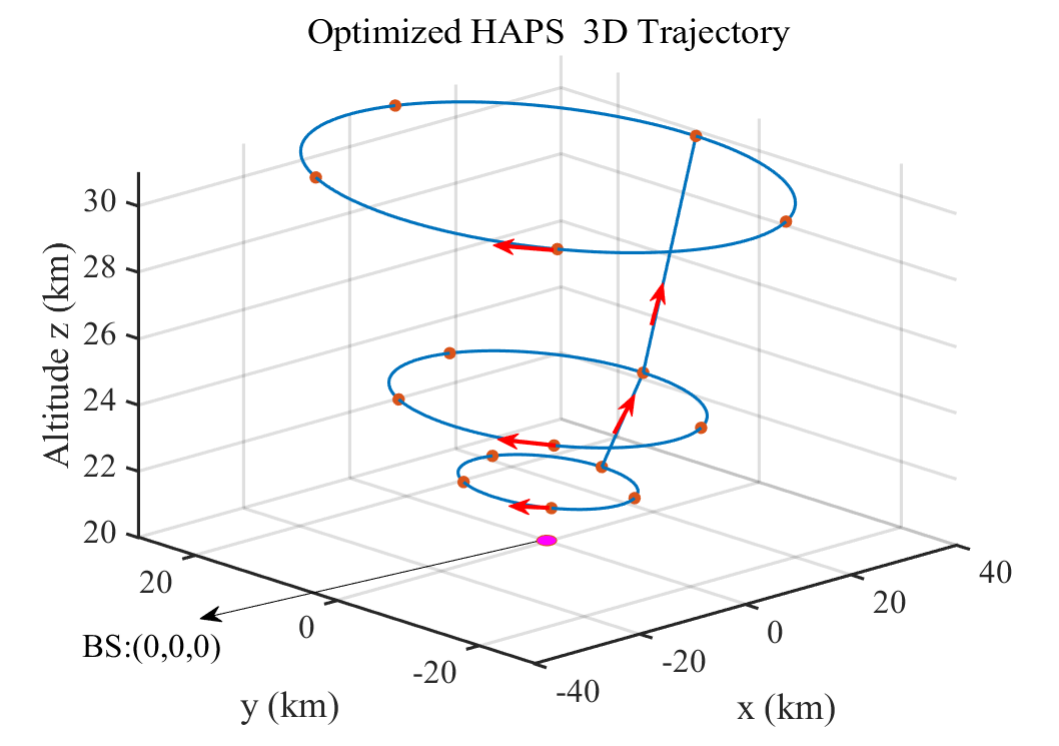}}}	
	\caption{Optimized HAPs-SAR 3D trajectory.}
	\label{fig5}
\end{figure}

Fig. \ref{fig5} illustrates the optimized three-dimensional trajectory of the HAPs, where the red arrows denote its flight direction and the magenta marker at the origin represents the ground BS location. The resulting trajectory configuration is governed by the trajectory constraints $C1$–$C4$, while the altitude variations arise from the sweep-wise adaptability permitted under constraint $C5$. From a system design perspective, although higher altitudes allow for broader beam coverage, they also incur significantly greater energy consumption. Despite the incorporation of solar energy harvesting via solar panels, the HAPs platform faces stringent energy limitations due to the substantial energy demands associated with sustained operation.
As a result, it is not feasible for the HAPs to continuously operate at maximum altitude throughout the mission. Therefore, an adaptive altitude strategy is essential, balancing between extended coverage area and energy efficiency. The derived trajectory reflects a trade-off that maximizes SAR coverage while ensuring operational sustainability under energy constraints.

\begin{figure}[htp]
	\centering
	{
		{\includegraphics[width=0.38\textwidth]{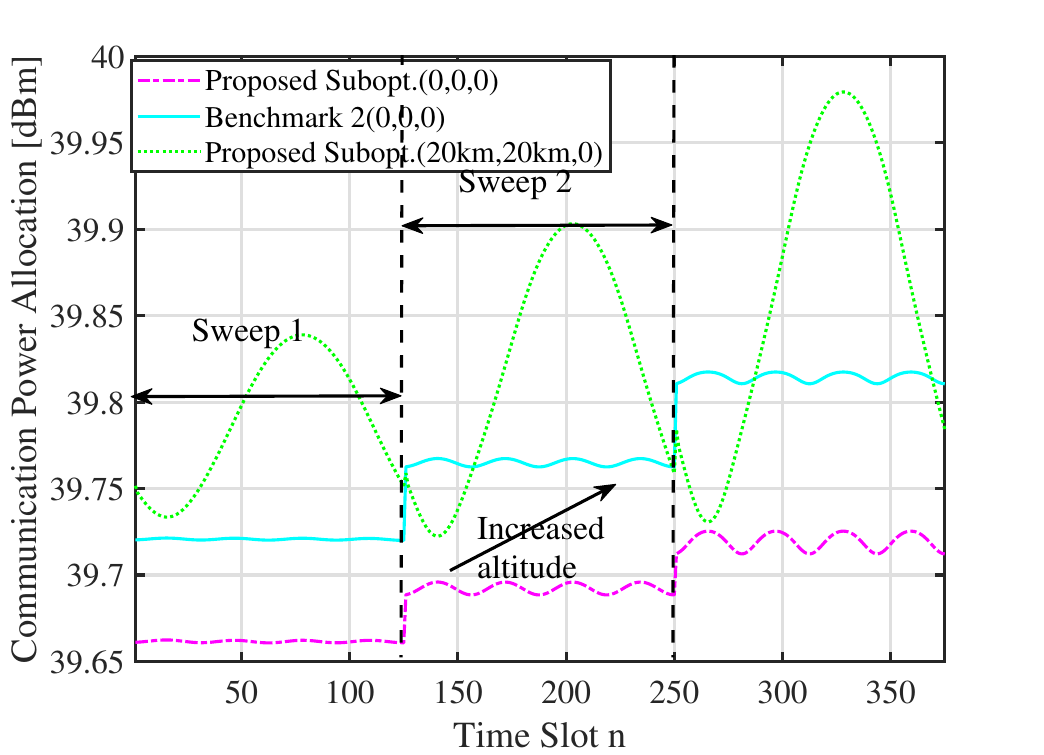}}}	
	\caption{Communication power allocation vs the time slot $n$.}
	\label{fig6}
\end{figure}

Fig. \ref{fig6} illustrates the communication power allocation under different BS deployment scenarios. The analysis reveals that as the HAPs ascends, increased communication power is necessary to satisfy constraint $C8$, which governs the minimum required data rate. This is attributed to the augmented path loss associated with higher altitudes, necessitating greater transmission power to maintain link quality. When the radar transmission power is fixed at its optimized value, the communication power can be further refined through optimization, as indicated by the blue solid line.
Furthermore, when the BS is positioned at (20km, 20km, 0), the communication power exhibits more pronounced temporal variations across time slots compared to the case where the BS is located at the origin (0, 0, 0), as shown by the green dashed line. This behavior arises from the dynamic distance changes between the HAPs and the BS due to the HAPs's circular trajectory. To maintain the required communication rate under varying distances, the HAPs must adapt its transmission power accordingly.
Additionally, the symmetry of the HAPs's circular motion imparts a corresponding symmetry to the communication power profile within a single sweep. This symmetry arises because the HAPs's distance to the BS varies in a predictable manner during its circular trajectory, leading to a mirrored pattern in the required communication power over time.
These observations underscore the critical importance of adaptive power control strategies in HAPs operations. By dynamically adjusting communication power in response to altitude changes and BS positioning, it is possible to enhance energy efficiency and maintain robust communication links throughout the mission.

\begin{figure}[htp]
	\centering
	{
		{\includegraphics[width=0.38\textwidth]{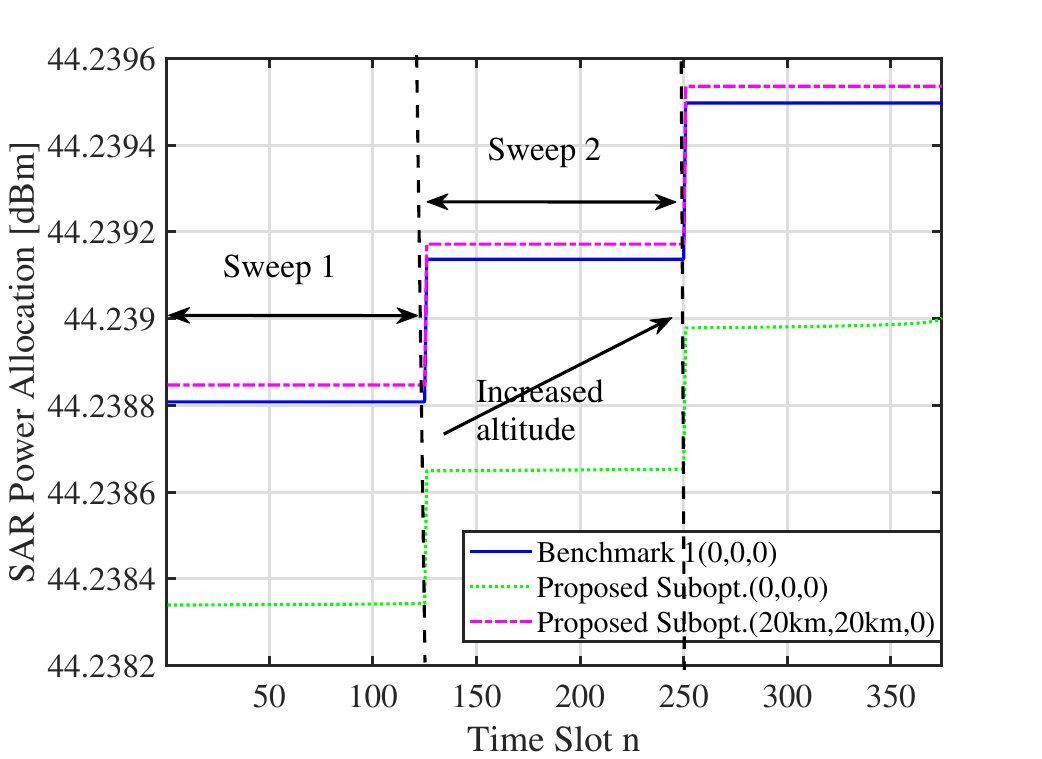}}}	
	\caption{SAR power allocation vs the time slot $n$.}
	\label{fig7}
\end{figure}

Fig. \ref{fig7} illustrates the SAR power allocation as a function of the time slot index. Unlike the communication power, which varies dynamically, the SAR transmission power remains constant within each sweep due to the enforcement of constraint $C10$. This constraint ensures that the radar system operates with uniform power across individual sweeps, maintaining consistency in SAR illumination.
Nonetheless, a similar trend is observed with respect to altitude. Namely as the HAPs ascends, the SAR system requires higher transmission power to ensure sufficient ground coverage and signal strength, owing to increased propagation losses at higher altitudes, which can be more clearly in \eqref{equ25}.
Furthermore, the location of the ground BS indirectly influences SAR power allocation. Variations in communication power, driven by the BS position and its associated rate constraints $C8$, can impose limitations on the total available power budget, thereby affecting the power allocated to SAR operations. This dependency is reflected by the magenta dashed line in the figure, highlighting the interplay between communication demands and radar resource allocation.

\begin{figure}[htp]
	\centering
	{
		{\includegraphics[width=0.38\textwidth]{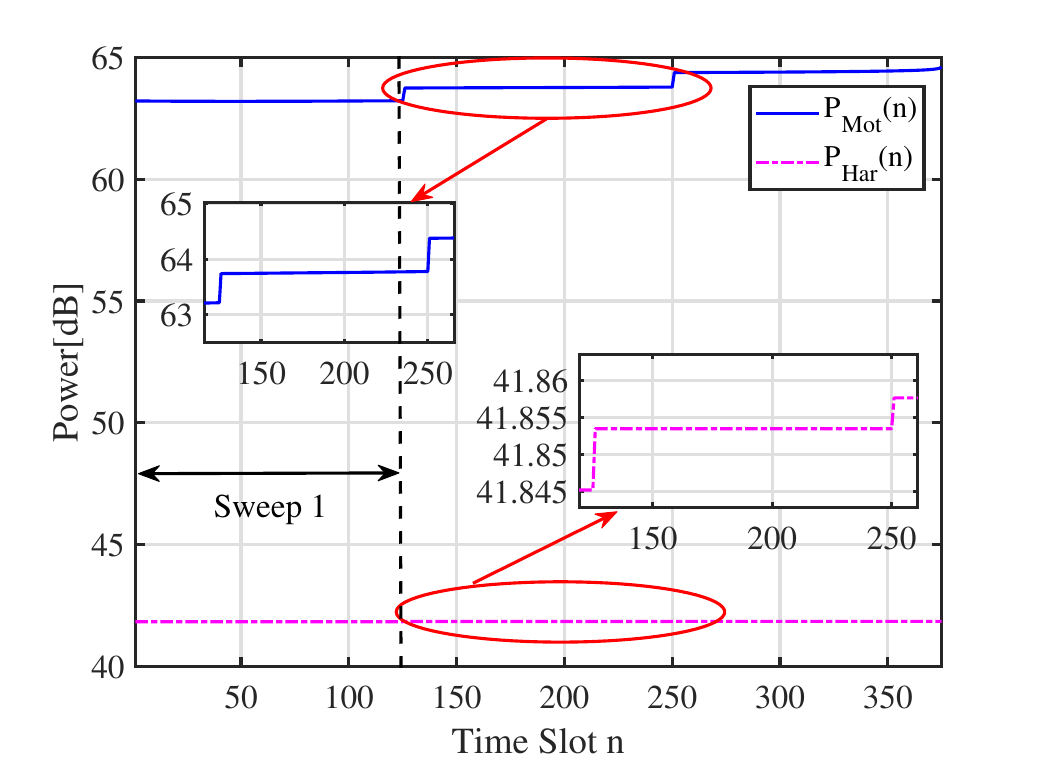}}}	
	\caption{Motion and harvesting power allocation.}
	\label{fig8}
\end{figure}
Fig. \ref{fig8} illustrates the temporal profiles of both the power consumed for maintaining HAPs mobility and the power harvested via onboard solar panels. Under the parameter configuration adopted in this paper, it is evident that the energy consumption consistently exceeds the harvested energy. This is primarily due to the analytical focus on the discharge phase of the solar-powered system, wherein the solar batteries are assumed to be fully charged, allowing for an isolated assessment of the HAPs-SAR system performance under energy depletion conditions.
Although this paper focuses on the energy-limited operation scenario, it is important to acknowledge that in practical deployments, the HAPs must also harvest sufficient energy to sustain prolonged operation. Future investigations may explore joint charging and operational strategies to ensure system sustainability, as discussed in \cite{javed2023interdisciplinary,Javed2025SystemDesign}.
Furthermore, it can be observed that within each sweep, the energy harvested and the energy consumed for propulsion are balanced. This is a direct consequence of constraints $C5$–$C7$, which enforce an energy neutrality condition over each sweep. Additionally, as the HAPs altitude increases, both the propulsion energy consumption and solar energy harvesting rise, driven respectively by increased aerodynamic resistance and enhanced solar irradiance at higher altitudes.

\begin{figure}[htp]
	\centering
	{
		{\includegraphics[width=0.38\textwidth]{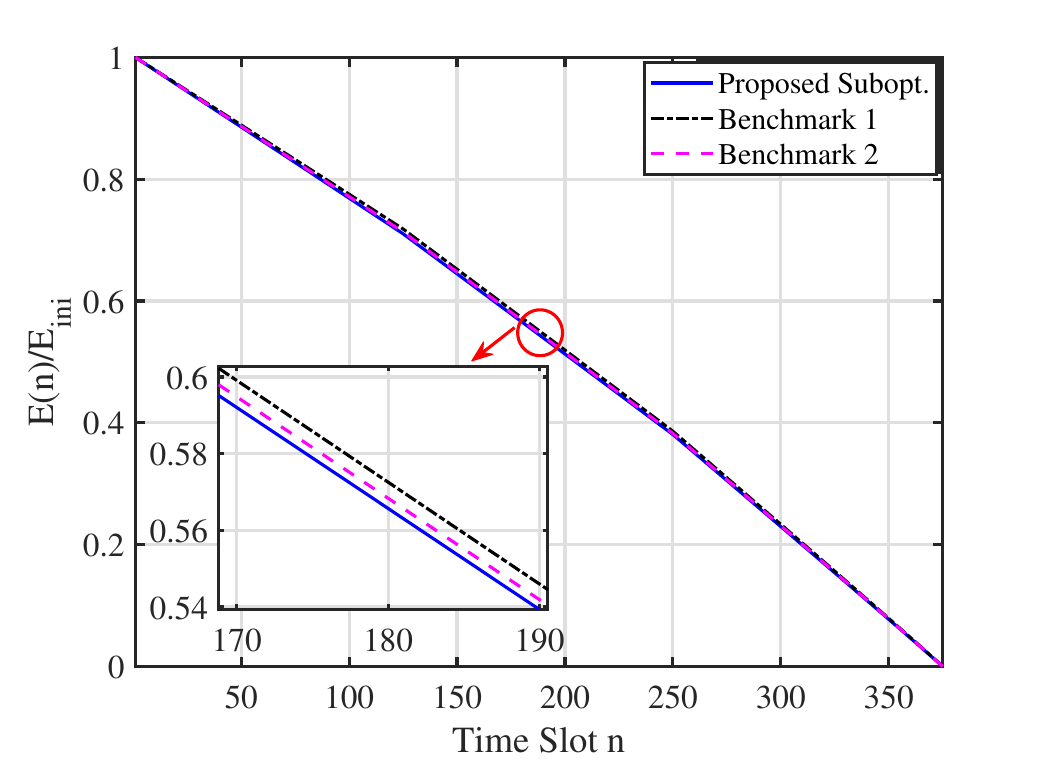}}}	
	\caption{Ratio of remaining energy to initial energy over time for the HAPs.}
	\label{fig9}
\end{figure}

Fig. \ref{fig9} presents the ratio of the remaining energy to the initial energy of the HAPs at each time slot $n$. Although the remaining energy reaches zero upon completion of the prescribed sensing mission, which is mathematically permissible, this condition is impractical for real-world systems. In practice, the HAPs must retain sufficient energy to either return to base or remain in standby mode for future tasks. Nevertheless, since this paper focuses solely on the discharging process and only requires satisfaction of constraint $C14$, the results depicted in Fig. \ref{fig9} are considered feasible within the context of the proposed model. Furthermore, it is evident that the energy ratio decreases approximately linearly over time, with proposed method exhibiting the most rapid energy consumption among the compared schemes.

\begin{figure}[htp]
	\centering
	\subfigure{
	{\includegraphics[width=0.28\textwidth]{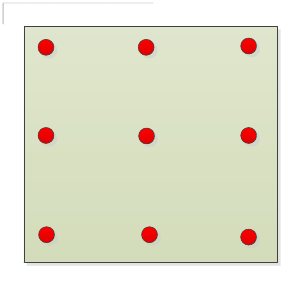}}}	
	\caption{SAR imaging scenario with a 9-target lattice configuration.}
	\label{fig10}
\end{figure}

\begin{figure}[htp]
	\centering
	\subfigure{
		{\includegraphics[width=0.45\textwidth]{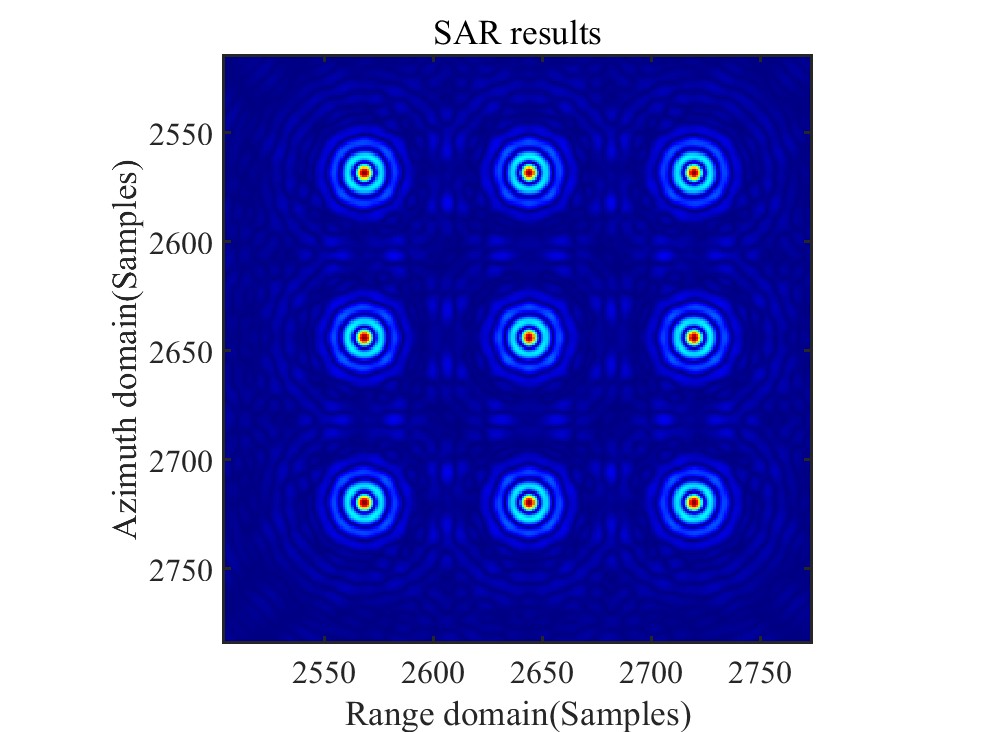}}}	
	\caption{SAR imaging results corresponding to the 9-target lattice configuration.}
	\label{fig11}
\end{figure}

To further verify the feasibility of the proposed trajectory and resource allocation strategy for SAR imaging, a simulation was conducted on the 9-target lattice scenario depicted in Fig. \ref{fig10}, using the parameters outlined in this paper and a well-established polar formatting algorithm (PFA) \cite{ZhangLin2023HolographicSAR,LinHong2011ExtensionofRange,ZhangSun2025LargeScene}. The corresponding imaging results are shown in Fig. \ref{fig11}. It can be clearly seen that all targets, both in the center of the scene and in the periphery, are accurately focused. These results provide additional evidence supporting the effectiveness of the proposed optimization framework.

\section{Conclusion}
\label{sec5}

In this work, we proposed a joint trajectory and resource optimization framework for HAPs-SAR systems aimed at real-time sensing and energy efficiency. By offloading SAR image reconstruction to a ground BS, the HAPs can operate with reduced onboard processing and energy costs. We developed a detailed flight model and performed energy analysis considering solar power harvesting and consumption from radar sensing, communication, and motion. A MINLP problem was formulated to maximize beam coverage under system constraints, and an SCA-based sub-optimal solution was proposed. Extensive simulations verified the convergence and effectiveness of the proposed algorithm across multiple performance dimensions. The SAR imaging results for a 9-target lattice scenario further validated the practicality and robustness of the optimization framework. Future work may extend this framework to include adaptive scheduling, multi-HAPs cooperative scenarios, and validation using real-world measurement data.


\bibliographystyle{IEEEtran}
\bibliography{ref}
\end{document}